\theoremstyle{plain}
\newtheorem{theorem}{Theorem}
\newtheorem*{theorem*}{Theorem}
\newtheorem{lemma}[theorem]{Lemma}
\newtheorem*{lemma*}{Lemma}
\newtheorem*{proposition*}{Proposition}
\newtheorem{corollary}[theorem]{Corollary}
\newtheorem*{corollary*}{Corollary}
\theoremstyle{definition}
\newtheorem{remark}[theorem]{Remark}
\newcommand{\rfig}[1]{\mbox{Fig. \ref{fig:#1}}}
\newcommand{\ralg}[1]{\mbox{Algorithm~\ref{alg:#1}}}
\newcommand{\rlemma}[1]{\mbox{Lemma~\ref{lem:#1}}}
\newcommand{\rcorollary}[1]{\mbox{Corollary \ref{cor:#1}}}
\newcommand{\rtheorem}[1]{\mbox{Theorem \ref{thm:#1}}}
\newcommand{\requation}[1]{\mbox{Equation~(\ref{eq:#1})}}
\newcommand{\eps}{\varepsilon}
\newcommand{\Ex}[1]{\mathbb{E}\left[#1\right]}
\newcommand{\Var}[1]{\text{\normalfont $\mathbb{V}$ar}\left[#1\right]}
\renewcommand{\Pr}[1]{\mathbb{P}\left[#1\right]}
\newcommand{\size}[1]{\normalfont \textsf{size}(#1)}
\newcommand{\abs}[1]{\left|#1\right|} 
\newcommand{\MODRATIO}{\textsc{modular-ratio-max}\xspace}
\newcommand{\bA}{\ensuremath{\mathbf{A}}}
\newcommand{\bB}{\ensuremath{\mathbf{B}}}
\newcommand{\cB}{\ensuremath{\mathcal{B}}}
\newcommand{\cC}{\ensuremath{\mathcal{C}}}
\newcommand{\cD}{\ensuremath{\mathcal{D}}}
\newcommand{\cO}{\ensuremath{\mathcal{O}}}
\renewcommand{\cP}{\ensuremath{\mathcal{P}}}
\newcommand{\EX}{\ensuremath{\mathsf{EX}}}
\newcommand{\RR}{\ensuremath{\mathbb{R}}}
\newcommand{\Rnn}{\ensuremath{\mathbb{R}_{\ge 0}}}
\begin{document}

\author{Jannik Kudla\\
University of Oxford\footnote{Now at Google.}\\
\texttt{janniku@google.com}
\and
Stanislav \v{Z}ivn\'y\\
University of Oxford\\
\texttt{standa.zivny@cs.ox.ac.uk}
}

\title{Sparsification of Monotone $k$-Submodular\\ Functions of Low
Curvature\thanks{This research was funded in whole by UKRI EP/X024431/1. For the
purpose of Open Access, the author has applied a CC BY public copyright licence
to any Author Accepted Manuscript version arising from this submission. All data
is provided in full in the results section of this paper.}}

\maketitle

\begin{abstract}
  Pioneered by Bencz\'ur and Karger for cuts in graphs~[STOC'96], sparsification
  is a fundamental topic with wide-ranging applications that has been studied,
  e.\,g., for graphs and hypergraphs, in a combinatorial and a spectral setting,
  and with additive and multiplicate error bounds. Rafiey and Yoshida recently
  considered sparsification of decomposable submodular functions~[AAAI'22]. We
  extend their work by presenting an efficient algorithm for a sparsifier for
  monotone $k$-submodular functions of low curvature. 
\end{abstract}

\section{Introduction}
\label{sec:intro}

The idea of ``sparsifying a graph'' (i.\,e., reducing the number of
edges) while preserving the value of all cuts goes back
to the influential paper~\cite{BK96}. 
The original motivation was to speed up algorithms for
cut problems and graph problems more generally. This concept turned out to be very
influential, with several generalisations and extensions
from graph cuts~\cite{Batson12:sicomp,BK15:sicomp,Andoni16:itcs}
to sketching~\cite{Ahn09:icalp,Andoni16:itcs}, sparsifiers for cuts in
hypergraphs~\cite{Kogan15:itcs,Newman13:sicomp}, spectral
sparsification~\cite{Spielman11:sicomp,Spielman04:stoc,Spielman11:sicomp-graph,Fung11:stoc,Lee18:sicomp,Soma19:soda,Kapralov21:focs}, sparsification of other predicates~\cite{Filtser17:sidma}, and additive
sparsification~\cite{Bansal19:focs}.

The cut function of a graph is an important example of a submodular function, which we define now.
Let $E$ be a finite set. A (set) function $F:2^E\to\RR$ defined on subsets of
$E$ is called \emph{submodular} if
\begin{equation}\label{ineq:submodular}
  F(S\cap T)+F(S\cup T)\ \leq\ F(S)+F(T)\qquad \forall S,T\subseteq E.
\end{equation}
Submodularity is a fundamental concept in combinatorial optimisation, with
applications across computer science and
economics~\cite{Nemhauser88:optimization,Topkis98:Supermodularity,schrijver2003combinatorial,Fujishige2005submodular}.
An equivalent definition of submodular functions captures the idea of
\emph{diminishing returns}.
\begin{equation}\label{ineq:submodular2}
  F(T\cup\{e\})-F(T)\ \leq\ F(S\cup\{e\})-F(S)\qquad \forall S\subseteq T\subseteq E, e\in E\setminus T.
\end{equation}

A set function $F$ is \emph{decomposable} if $F = \sum_{i
= 1}^N f_i$, where $f_i: 2^E \to \RR$ for each $i\in [N]=\{1,\ldots,N\}$ and
$E$ is a finite set of size $n=|E|$.
The cut function in a graph is an example of a decomposable submodular function,
in which the number $N$ of individual functions is equal to the number of edges
in the graph. (This is true even if the graph is directed and with nonnegative edge
weights.)

Rafiey and Yoshida~\cite{Rafiey22:aaai} considered the following natural
sparsification problem for $F$.\footnote{Each $f_i$ is represented by an oracle that returns, for any $S\subseteq E$, the value $f_i(S)$.}
Given tolerance parameters $\eps, \delta \in (0,
1)$, find a vector $w\in\RR^N$, called an $\eps$-\emph{sparsifier} (or just a
\emph{sparsifier}), such that the
function $F'=\sum_{i=1}^Nw_if_i$ satisfies, with probability at least
$1-\delta$,
\begin{equation}
  (1-\eps)F'(S)\ \leq\ F(S)\ \leq\ (1+\eps)F'(S)\qquad \forall S\subseteq E,
\end{equation}
and $\size{w}$, the set of nonzero entries of $w$, is as small as possible.
The idea in~\cite{Rafiey22:aaai} is, for each $i\in [N]$, to sample function
$f_i$ with probability $\kappa_i$ proportional to the ratio
\begin{equation}
  p_i = \max_{\substack{S\subseteq E\\F(S) \neq 0}} \frac{f_i(S)}{F(S)}.
\end{equation}
If $f_i$ is sampled, i.\,e., if it is decided that $f_i$ shall be part of the sparsifier, it is
included in the sparsifier with weight $1 / \kappa_i$, making its expected
weight equal to $\Ex{w_i} = \kappa \cdot 1 / \kappa_i = 1$ -- its weight in the
initial decomposition. In statistical terms, the sampling procedure is \emph{unbiased}.
The authors of~\cite{Rafiey22:aaai} showed the following.

\begin{theorem}[\cite{Rafiey22:aaai}]
    \label{thm:core-old}
    Let $F=\sum_{i=1}^N f_i$, where each $f_i:2^E\to\RR$ is submodular. For every $\eps,\delta\in(0,1)$ there is a vector $w \in \RR^N$ such that
    \begin{enumerate}[label = (\roman*)]
        \item $\Pr{\text{\normalfont $w$ is an $\eps$-sparsifier}} \ge 1 - \delta$;
        \item $\Ex{\size{w}} = \cO\left(\frac{n}{\eps^2} \sum_{i = 1}^N p_i \right)$,
        where $p_i = \max_{\substack{S \subseteq E\\F(S) \neq 0}} \frac{f_i(S)}{F(S)}$.
    \end{enumerate}
\end{theorem}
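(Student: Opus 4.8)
The plan is to analyse the importance-sampling construction described above: include each $f_i$ independently with probability $\kappa_i$, give it weight $1/\kappa_i$ if included and $0$ otherwise, and then bound the failure probability by a Chernoff bound for a single fixed set, followed by a union bound over all subsets of $E$. Concretely, I would fix a parameter $\rho$ of order $\frac{1}{\eps^2}\bigl(n+\log(1/\delta)\bigr)$ and put $\kappa_i=\min\{1,\rho\,p_i\}$. I work under the assumption that each $f_i\ge 0$ (the setting of the applications), so that $F(S)\neq 0$ means $F(S)>0$, and we may also assume $p_i>0$ for every $i$ (if $p_i=0$ then $f_i\equiv 0$ and can be dropped). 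Unbiasedness gives $\Ex{w_i}=1$, hence $\Ex{F'(S)}=F(S)$ for all $S$ and $\Ex{\size{w}}=\sum_{i=1}^N\kappa_i\le\rho\sum_{i=1}^N p_i$, which is the bound in (ii) (the $\log(1/\delta)$ overhead of the union bound being absorbed into the $\cO$).

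The heart of the argument is concentration for a fixed $S\subseteq E$. If $F(S)=0$, then by nonnegativity $f_i(S)=0$ for every $i$, so $F'(S)=0$ and the sparsifier inequalities hold trivially; so assume $F(S)>0$. I would write $F'(S)=D+R$, where $D=\sum_{i:\kappa_i=1}f_i(S)$ is deterministic and $R=\sum_{i:\kappa_i<1}w_if_i(S)$ is a sum of independent nonnegative random variables with $\Ex{R}=F(S)-D\le F(S)$. The role of $p_i$ is exactly that, for $\kappa_i<1$, we have $\kappa_i=\rho\,p_i$ and $f_i(S)\le p_iF(S)$, so each summand $w_if_i(S)=f_i(S)/\kappa_i$ lies in $[0,\,F(S)/\rho]$. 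Rescaling $R$ by $\rho/F(S)$ to obtain a sum of $[0,1]$-valued variables and applying the multiplicative Chernoff bound (using $\Ex{R}\le F(S)$ to turn the additive deviation $\eps F(S)$ into a multiplicative one of at least $\eps$) yields
\begin{equation*}
  \Pr{\abs{F'(S)-F(S)}>\eps F(S)}\;=\;\Pr{\abs{R-\Ex{R}}>\eps F(S)}\;\le\;2\exp(-c\,\eps^2\rho)
\end{equation*}
for a universal constant $c>0$ (say $c=1/3$).

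Finally, a union bound over the at most $2^n$ subsets of $E$ shows that $w$ fails to be an $\eps$-sparsifier with probability at most $2^{n+1}\exp(-c\,\eps^2\rho)$, which is at most $\delta$ once the constant hidden in $\rho$ is chosen large enough; this establishes (i). I expect the concentration step to be the main obstacle: everything rests on the variance of $F'(S)$ being of order $F(S)^2/\rho$, which relies on (a) $p_i$ being \emph{exactly} the worst-case ratio $f_i(S)/F(S)$, so that each random summand is capped by $F(S)/\rho$, and (b) peeling off the deterministic terms with $\kappa_i=1$, which contribute no variance and would otherwise spoil the bound. A secondary subtlety is the $F(S)=0$ case, handled using nonnegativity of the $f_i$; I note that submodularity of the $f_i$ is not otherwise used in this argument, entering only in the applications where the quantity $\sum_i p_i$ can be controlled.
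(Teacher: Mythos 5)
Your proposal is correct and follows essentially the same route as the paper's proof of the general version (Theorem~\ref{thm:core} in Appendix~\ref{app:core-proofs}): unbiased importance sampling with $\kappa_i=\min\{1,\kappa p_i\}$, splitting off the deterministic terms with $\kappa_i=1$ before applying Chernoff--Hoeffding to the remaining $[0,F(S)/\kappa]$-bounded summands, and a union bound over the $2^n$ subsets. You even independently identified the key subtlety --- isolating the indices with $\kappa p_i>1$ --- which is precisely the correction the paper makes to the original argument of Rafiey and Yoshida.
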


Computing and in many interesting cases even approximating the $p_i$'s is by far
the hardest step on the way to constructing a sparsifier. We shall refer to the
$p_i$'s as the \emph{peak contributions} -- since $p_i$ describes, on a scale
from $0$ to $1$, the maximum contribution of $f_i$ to $F$ when a set $S \subseteq E$ is chosen in favour of $f_i$.

Let $F=\sum_{i=1}^Nf_i$ be as in Theorem~\ref{thm:core-old}, i.\,e., with all
$f_i$'s submodular. 
Let $|\EX(\cB(f_i))|$ be the number of extreme points in the base
polyhedron of $f_i$~\cite{Fujishige2005submodular}, and let
$B=\max_{i\in[N]}|\EX(\cB(f_i))|$ (cf. Appendix~\ref{subsec:prelims} for precise
definitions).
The authors of~\cite{Rafiey22:aaai} claim that
%, if all $f_i$'s are submodular, we have
\begin{equation}\label{ineq:peaks}
  \sum_{i=1}^Np_i\ \leq\ Bn,
\end{equation}
which implies by the virtue of Theorem~\ref{thm:core-old} the existence of a
sparsifier of expected size $\cO(\frac{Bn^2}{\eps^2})$.
As we will see later, this only holds if the $f_i$'s are monotone.
Using an $\cO(\sqrt{n})$-approximation of the peak contributions using the
ellipsoid method~\cite{Bai16:icml}, it is then established in \cite{Rafiey22:aaai}  that if
all $f_i$'s are not only submodular but also monotone, a sparsifier of expected size 
$\cO(\frac{Bn^{2.5}\log n}{\eps^2})$ can be found in randomised polynomial time,
assuming $(\ref{ineq:peaks})$ holds.
Here a function $F: 2^E\to\RR$ is called \emph{monotone} if $F(S) \leq F(T)$ for any $S\subseteq
T\subseteq E$.

\noindent\paragraph{Contributions}

As our main contribution, we provide a sparsification algorithm for decomposable monotone
$k$-submodular functions of low curvature.
As a starting point, we observe in Section~\ref{sec:core} (and prove in
Appendix~\ref{app:core-proofs}) that the sampling algorithm
from~\cite{Rafiey22:aaai} used to prove Theorem~\ref{thm:core-old} is largely independent
of submodularity, leading to a more general sparsification algorithm for
decomposable functions. 
Along the way, we establish a concentration bound
revealing that it is very unlikely that the resulting sparsifier exceeds
$(3/2)$-times the expected size.
In detail, consider a finite domain $\cD$, which is the power set $\cD = 2^E$ in the case of set
functions. Further suppose that $F: \cD \to \RR$ is decomposable as $F = \sum_{i =
1}^N f_i$, where $f_i: \cD \to \RR$ for each $i\in [N]$.\footnote{Each $f_i$ is
represented by an evaluation oracle that takes time $\cO(\text{EO}_i)$ to return
$f_i(S)$ for any $S\in\cD$.}

\begin{theorem}[Informal version of Theorem~\ref{thm:core}]
    \label{thm:core-informal}
    Let $F=\sum_{i=1}^N f_i$, where $f_i:\cD\to\RR$. For every $\eps,\delta\in(0,1)$ there is a vector $w \in \RR^N$ such that
    \begin{enumerate}[label = (\roman*)]
        \item 
          $\Pr{\text{\normalfont $w$ is an $\eps$-sparsifier}} \ge 1 - \delta$;
        \item 
          $\Ex{\size{w}} = \cO\left(\frac{\log{|\cD|} + \log{\frac{1}{\delta}}}{\eps^2} \sum_{i = 1}^N p_i \right)$,
        where $p_i = \max_{\substack{A \in \cD\\F(A) \neq 0}} \frac{f_i(A)}{F(A)}$;
        \item 
          $\Pr{\size{w} \le \frac{3}{2} \Ex{\size{w}}} \ge 1 - 4\eps^2$.
    \end{enumerate}
\end{theorem}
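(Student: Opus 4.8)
The plan is to analyse the randomised rounding procedure directly. For each $i\in[N]$ set $\kappa_i=\min\{1,\kappa p_i\}$ for an appropriate threshold $\kappa$ (chosen so that the sum of the $\kappa_i$ equals the claimed expected size up to constants), and let $X_i$ be the indicator that $f_i$ is sampled, so that $\size{w}=\sum_{i=1}^N X_i$ and $\Ex{\size{w}}=\sum_i\kappa_i=:\mu$. The three conclusions are essentially independent: (i) and (ii) are the content of the more general version of Theorem~\ref{thm:core-old} for decomposable functions, obtained by replacing the union-bound over $2^E$ with a union bound over $|\cD|$ (hence the $\log|\cD|+\log\frac1\delta$ factor from a Chernoff/Bernstein bound on each $|F'(A)-F(A)|$), so I would simply invoke that analysis. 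The new ingredient is (iii), a concentration statement for $\size{w}$ itself, and this is where essentially all the work lies.

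For (iii), first I would dispose of the trivial case: if $\mu=\Ex{\size{w}}$ is bounded by an absolute constant then a direct argument (or a crude bound) suffices, so assume $\mu$ is large. Since the $X_i$ are independent $\{0,1\}$ variables, $\Var{\size{w}}=\sum_i\kappa_i(1-\kappa_i)\le\mu$, and by Chebyshev's inequality $\Pr{\size{w}>\frac32\mu}\le\Pr{|\size{w}-\mu|>\frac\mu2}\le\frac{4\Var{\size{w}}}{\mu^2}\le\frac4\mu$. It then remains to show $\mu=\Omega(1/\eps^2)$, i.e. that the expected sparsifier size is at least of order $\eps^{-2}$; combining these two facts gives $\Pr{\size{w}>\frac32\mu}\le 4\eps^2$, which is exactly conclusion (iii). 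The lower bound $\mu\ge c/\eps^2$ should follow from the choice of the sampling threshold $\kappa$ in the algorithm: to make the Chernoff bound in (i) go through with the $\eps$-multiplicative guarantee one takes $\kappa$ of order $\eps^{-2}(\log|\cD|+\log\frac1\delta)$, and since $p_i\ge 0$ with at least one $p_i$ bounded below by a constant (indeed some $p_i\ge 1/N$, and more usefully the normalisation forces $\sum_i p_i\ge 1$), one gets $\mu=\sum_i\min\{1,\kappa p_i\}\ge\min\{1,\kappa\}\cdot(\text{something})$; more carefully, if all $\kappa p_i\le 1$ then $\mu=\kappa\sum_i p_i\ge\kappa\ge c/\eps^2$, while if some $\kappa p_i>1$ then $X_i$ is deterministically $1$ and one argues separately. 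A cleaner route: whenever the sparsifier is non-trivial there is at least one index with $\kappa_i=\min\{1,\kappa p_i\}$ and summing, $\mu\ge\kappa\sum_{i:\,\kappa p_i\le1}p_i$; using $\sum_i p_i\ge1$ (from $F$ itself being one of the "test sets" in the max) one extracts $\mu=\Omega(\kappa)=\Omega(\eps^{-2})$.

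The main obstacle I anticipate is making the lower bound $\Ex{\size{w}}=\Omega(1/\eps^2)$ fully rigorous and clean, because it is the one place where the argument is not a black-box application of standard tail inequalities: one must track exactly how the threshold $\kappa$ is defined inside the sampling algorithm and rule out degenerate regimes (e.g. a single $f_i$ with $p_i=1$ sampled with probability $1$, or all $p_i$ equal and tiny). A careful case split on whether $\kappa\,p_i\le 1$ for all $i$, combined with the observation $\sum_i p_i\ge 1$, handles this, but the bookkeeping has to be done honestly. Everything else — the Chebyshev computation, the variance bound for a sum of independent indicators, and the reduction to $\mu\ge c/\eps^2$ — is routine, and parts (i)–(ii) are inherited verbatim from the generalised analysis of the algorithm from~\cite{Rafiey22:aaai} that the paper carries out in Appendix~\ref{app:core-proofs}.
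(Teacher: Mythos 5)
Your overall route coincides with the paper's: parts (i) and (ii) are the generalised Rafiey--Yoshida analysis with a union bound over $\cD$ in place of $2^E$, and part (iii) is Chebyshev applied to $\size{w}=\sum_i X_i$ using $\Var{\size{w}}=\sum_i\kappa_i(1-\kappa_i)\le\Ex{\size{w}}$, reduced to the lower bound $\Ex{\size{w}}\ge 1/\eps^2$. You have also correctly located the only non-routine step. The case you actually settle --- all $\kappa p_i\le 1$, where $\Ex{\size{w}}=\kappa\sum_i p_i\ge\kappa\ge 1/\eps^2$ via the observation $\sum_{i=1}^N p_i\ge 1$ (proved by evaluating all the maxima at one fixed $A^*$ with $F(A^*)\neq 0$) --- is exactly the paper's argument.

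The gap is the complementary case, which you leave as ``one argues separately'', and your proposed ``cleaner route'' does not close it: from $\Ex{\size{w}}\ge\kappa\sum_{i:\,\kappa p_i\le 1}p_i$ you cannot conclude $\Ex{\size{w}}=\Omega(\kappa)$, because the guarantee $\sum_{i=1}^N p_i\ge 1$ says nothing about the restricted sum --- all of the mass can sit on the deterministic indices. Concretely, take $p_1=1$ (so $\kappa_1=1$) and a single further index with $\kappa_2=0.1$: then $\Ex{\size{w}}=1.1$, Chebyshev gives only a trivial tail bound, and $\Pr{\size{w}>\frac{3}{2}\Ex{\size{w}}}=\Pr{\size{w}=2}=0.1$ independently of $\eps$, which exceeds $4\eps^2$ for small $\eps$. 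Such instances are realisable within the hypotheses of the theorem (arbitrary $f_i:\cD\to\RR$), so this is not bookkeeping: the deterministic-index regime is where the claim itself is under strain, and part (iii) needs either an extra hypothesis excluding it or a reformulation there. (For what it is worth, the paper's own proof disposes of this case with the chain $\Ex{\size{w}}\ge 1\ge 1/\eps^2$, whose second inequality is false for $\eps<1$; so you have put your finger on a genuine soft spot rather than omitted a routine detail, but your proposal as written does not repair it.)
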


As our primary contribution, we use Theorem~\ref{thm:core-informal} to give in Section~\ref{sec:curvature} a sparsifier for decomposable monotone $k$-submodular functions of low curvature. 
As our secondary contribution, we clarify certain results on sparsification of submodular functions from~\cite{Rafiey22:aaai}. 
Firstly, we show 
that Inequality~(\ref{ineq:peaks}) claimed in~\cite{Rafiey22:aaai} is
incorrect by giving a counterexample. However, we show that Inequality~(\ref{ineq:peaks}) holds
under the additional assumption of monotonicity.  This is done in
Appendix~\ref{app:submodular}.
Secondly, in Appendix~\ref{app:bounded-arity}
we give a sparsifier for a class
of decomposable monotone submodular functions of bounded arity.

\medskip
For a natural number $k\geq 1$, a function $F:(k+1)^E\to\RR$ defined on $k$-tuples of
pairwise disjoints subsets of $E$ is called $k$-submodular if $f$ satisfies inequalities
similar to the submodularity inequality given in
Inequality~(\ref{ineq:submodular}). 
In detail, let $\bA=(A_1,\ldots,A_k)\in(k+1)^E$ be a $k$-tuple of pairwise disjoint
subsets of $E$, and similarly for $\bB=(B_1,\ldots,B_k)\in(k+1)^E$. Then,
$F:(k+1)^E\to\RR$ is called \emph{$k$-submodular} if
\begin{equation}
f(\bA\sqcap \bB)+f(\bA\sqcup \bB) \ \leq\ f(\bA)+f(\bB),
\end{equation}
where 
\begin{align}
\bA\sqcap \bB\ &=\ (A_1\cap B_1,\ldots, A_k\cap B_k), \\
\intertext{and}
\bA\sqcup \bB\ &=\ ((A_1\cup B_1)\setminus\smashoperator{\bigcup_{i\in\{2,\ldots,k\}}}\,(A_i\cup B_i),\ldots,(A_k\cup B_k)\setminus\smashoperator{\bigcup_{i\in\{1,\ldots,k-1\}}}\,(A_i\cup B_i)).
\end{align}
Under this definition, $1$-submodularity corresponds exactly to the standard
notion of submodularity for set functions as defined in
Inequality~(\ref{ineq:submodular}), and
similarly $2$-submodularity corresponds to \emph{bisubmodularity}~\cite{Bouchet87:greedy,Chandrasekaran88:pseudomatroids}.
The class of $k$-submodular functions was introduced in~\cite{Huber12:ksub} and played an
important role in the study of so-called finite-valued CSPs~\cite{HKP14:sicomp,KTZ15:sicomp}.
While minimising 1-submodular functions~\cite{Schrijver00:submodular,Iwata01:submodular}
and 2-submodular functions~\cite{Fujishige06:bisubmodular} given by evaluation
oracles can be done efficiently~\cite{Iwata08:sfm-survey}, the
complexity of the minimisation problem of $k$-submodular functions is open for
$k\geq 3$. On the other hand, the approximability of the
maximisation problem is well
understood for $k$-submodular functions~\cite{WZ16:talg,Iwata16:soda,Oshima21:sidma}, also for the monotone
case under cardinality constraint~\cite{Sakaue17:do}, in the streaming
model~\cite{Ene22:icml}, and other variants~\cite{Tang22:orl,Tang22:tcs,Pham22:jco}.

The definition of monotonicity for submodular functions gracefully extends
to $k$-submodular functions: $F:(k+1)^E\to\RR$ is called \emph{monotone} if
$F(\bA)\leq F(\bB)$ for all $\bA=(A_1,\ldots,A_k)\in(k+1)^E$ and
$\bB=(B_1,\ldots,B_k)\in(k+1)^E$ with $A_i\subseteq B_i$ for every $i\in[k]$.

An important concept studied in the context of submodular functions is that of
\emph{bounded curvature}~\cite{Conforti84:dam}.
For a monotone submodular function $F:2^E\to\Rnn$, the
\emph{curvature} (also called \emph{total curvature} in~\cite{Vondrak10}) $c_F$ of $F$
is defined by
\begin{equation}
  c_F\ =\ 1-\min_{S\subseteq E,e\in E\setminus S}\frac{\Delta_e
  F(S)}{\Delta_e F(\emptyset)},
\end{equation}
where $\Delta_e f(S)$ denotes the marginal gain of $e$ with respect to $S$,
i.\,e., 
\begin{equation}
  \Delta_e F(S)\ =\ F(S\cup\{e\})-F(S).
\end{equation}
In other words, the curvature compares the marginal gain of adding an element
of the ground set to any set and the empty set. Note that $c_F\in [0,1]$, with the
upper bound following from Inequality~(\ref{ineq:submodular2}). Also, $c_F=0$ holds
precisely when $f$ is modular, i.\,e., when Inequality~(\ref{ineq:submodular})
(equivalently, Inequality~(\ref{ineq:submodular2}))
holds with equality. We say that $f$ has \emph{low curvature} if $c_F<1$.
Intuitively, the curvature $c_F$ represents ``how
much the function curves''. The notion of curvature was extended from submodular
to $k$-submodular functions in~\cite{SantiagoS19:icml}, cf. also~\cite{ksub-curvature-def}.
In order to define it, we first need to introduce the notion of
marginal values for $k$-submodular functions, which is a natural generalisation
of the $k=1$ case. Let $F:(k+1)^E\to\RR$ be a $k$-submodular function. For 
$\bA=(A_1,\ldots,A_k)$, $i\in [k]$, and $e\in E\setminus\cup_{j\in [k]}A_j$, we
define the marginal gain of $e$ with respect to $\bA$ and $i$ as
\begin{equation}
  \Delta_{e,i} F(\bA)\ =\
  F(A_1,\ldots,A_{i-1},A_i\cup\{e\},A_{i+1},\ldots,A_k)-F(\bA).
\end{equation}
Then, the curvature $c_F$ of $F$ is defined as
\begin{equation}
  c_F\ =\ 1-\min_{i\in [k], e\in E, \bA\in(k+1)^{E\setminus\{e\}}}    
  \frac{\Delta_{e,i} F(\bA)}{\Delta_{e,i} F(\emptyset)}.
\end{equation}
As before, we say that $f$ has \emph{low curvature} if $c_F<1$.

As our main contribution, we will show that under the assumption of monotonicity
and low curvature one can efficiently approximate the peak contributions,
leading to an efficient execution of the sampling algorithm from
Section~\ref{sec:core}. Apart from being technically non-trivial, we also see
our work as a conceptual contribution to the area of sparsification by exploring
more general settings than previous works.

\section{The Core Algorithm}
\label{sec:core}

In this section, we will describe the core of all our sparsification algorithms
-- a randomised sampling routine initially described by Rafiey and
Yoshida~\cite{Rafiey22:aaai} for decomposable submodular functions. As alluded
to in Section~\ref{sec:intro}, we observe
that it is largely independent of submodularity, leading to a more general
sparsification algorithm for decomposable functions.  
Most of the presented material follows closely Section 3
in~\cite{Rafiey22:aaai} and details are deferred to Appendix~\ref{app:core-proofs}.

The algorithm we present here constructs an $\eps$-sparsifier for any
decomposable function $F = \sum_{i = 1}^N f_i: \cD \to \RR$
probabilistically.
As in~\cite{Rafiey22:aaai}, it relies on sampling functions with probabilities proportional to the ratios,
for each $i \in [N]$,
\begin{align}
    p_i = \max_{\substack{A \in \cD\\F(A) \neq 0}} \frac{f_i(A)}{F(A)}.
\end{align}

The procedure with all details is given in~\ralg{core}.

\begin{algorithm}[htbp]
    \begin{algorithmic}[1]
        \Require Function $F = f_1 + \dots + f_N$ with $f_i: \cD \to
        \RR$ given by evaluation oracles;
        error tolerance parameters $\eps, \delta \in (0, 1)$
        \Ensure Vector $w \in \RR^N$ such that
        \begin{itemize}
            \item $\Pr{\text{$w$ is an $\eps$-sparsifier}} \ge 1 - \delta$;
            \item $\Ex{\size{w}} = \cO\left(\frac{\log{|\cD|} + \log{\frac{1}{\delta}}}{\eps^2} \sum_{i = 1}^N p_i \right)$,
            where $p_i = \max_{\substack{A \in \cD\\F(A) \neq 0}} \frac{f_i(A)}{F(A)}$;
            \item $\Pr{\size{w} \le \frac{3}{2} \Ex{\size{w}}} \ge 1 - 4\eps^2$.
        \end{itemize}
        \State $w \gets (0, \dots, 0)$
        \State $\kappa \gets 3 \log{\left( \frac{2 |\cD|}{\delta} \right)} / \eps^2$
        \For{$i = 1, \dots, N$}
            \State \label{alg:core:pi} $p_i \gets \max_{\substack{A \in \cD\\F(A) \neq 0}} \frac{f_i(A)}{F(A)}$ \Comment{compute peak contribution (here: naively)}
            \State $\kappa_i \gets \min\{ 1, \kappa p_i \}$ \Comment{cap at $1$ as $\kappa_i$ is a probability}
            \State $w_i \gets \begin{cases}
                1/\kappa_i & \text{ with probability } \kappa_i \\
                0 & \text{ with probability } 1 - \kappa_i
            \end{cases}$ \Comment{sample weight of $f_i$}
        \EndFor
        \State \Return $w$
    \end{algorithmic}
    \caption{The Core Sparsification Algorithm}
    \label{alg:core}
\end{algorithm}

\begin{theorem}\label{thm:core}
    \ralg{core} outputs a vector $w \in \RR^N$ such that
    \begin{enumerate}[label = (\roman*)]
        \item
          $\Pr{\text{\normalfont $w$ is an $\eps$-sparsifier}} \ge 1 - \delta$;
        \item 
          $\Ex{\size{w}} = \cO\left(\frac{\log{|\cD|} + \log{\frac{1}{\delta}}}{\eps^2} \sum_{i = 1}^N p_i \right)$,
        where $p_i = \max_{\substack{A \in \cD\\F(A) \neq 0}} \frac{f_i(A)}{F(A)}$;
        \item \label{thm:core:size:variance0} 
          $\Pr{\size{w} \le \frac{3}{2} \Ex{\size{w}}} \ge 1 - 4\eps^2$.
    \end{enumerate}
\end{theorem}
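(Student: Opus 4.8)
The statement has three parts; I would treat them in increasing order of difficulty, writing $F'=\sum_{i=1}^N w_if_i$ throughout. Two observations set everything up. First, since $\Ex{w_i}=\kappa_i\cdot(1/\kappa_i)=1$ whenever $\kappa_i>0$, the sampling is \emph{unbiased}: $\Ex{F'(A)}=F(A)$ for every $A\in\cD$. Second, $\size{w}=\sum_{i=1}^N\mathbbm{1}[w_i\neq 0]$ is a sum of independent $\mathrm{Bernoulli}(\kappa_i)$ variables, so $\Ex{\size{w}}=\sum_{i=1}^N\kappa_i$. Part (ii) is then immediate from $\kappa_i=\min\{1,\kappa p_i\}\le\kappa p_i$: $\Ex{\size{w}}\le\kappa\sum_ip_i=\frac{3\log(2|\cD|/\delta)}{\eps^2}\sum_ip_i=\cO\!\big(\frac{\log|\cD|+\log(1/\delta)}{\eps^2}\sum_ip_i\big)$.

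For part (i), I would fix one $A\in\cD$ with $F(A)\neq 0$ — here I assume, as in all the applications, that the $f_i$ (hence $F$) are nonnegative, so $F(A)\neq 0$ means $F(A)>0$ — and prove the estimate $\Pr{\abs{F'(A)-F(A)}\ge\frac{\eps}{2}F(A)}\le\delta/|\cD|$. A union bound over the at most $|\cD|$ sets with $F(A)\neq 0$ then yields, with probability $\ge 1-\delta$, that every such $A$ satisfies $\abs{F'(A)-F(A)}\le\frac{\eps}{2}F(A)$, which by an elementary calculation implies $(1-\eps)F'(A)\le F(A)\le(1+\eps)F'(A)$; the sets with $F(A)=0$ need no argument, since nonnegativity forces $f_i(A)=0$ for all $i$ and hence $F'(A)=0$. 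To establish the estimate, split $[N]$ into the \emph{saturated} indices ($\kappa p_i\ge 1$, for which $\kappa_i=1$ and the term $w_if_i(A)=f_i(A)$ is deterministic) and the rest. For an unsaturated index $i$ one has $\kappa_i=\kappa p_i$ and $w_if_i(A)\in\{0,\,f_i(A)/(\kappa p_i)\}$; the crucial point is that $f_i(A)\le p_iF(A)$ by definition of $p_i$, so $0\le w_if_i(A)\le F(A)/\kappa$. Thus $F'(A)$ is a constant plus a sum of independent variables each taking values in $[0,F(A)/\kappa]$, with total mean at most $F(A)$, and a multiplicative Chernoff/Bernstein bound combined with the choice $\kappa=3\log(2|\cD|/\delta)/\eps^2$ bounds each of the two tails by $\delta/(2|\cD|)$. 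I expect this per-element step — extracting from the peak-contribution inequality $f_i(A)\le p_iF(A)$ the range/variance control that drives the concentration, and then pushing the absolute constants through — to be the main obstacle; the case where some $f_i$ take negative values (so that $f_i(A)\le p_iF(A)$ no longer bounds $\abs{f_i(A)}$) is exactly where a hypothesis such as nonnegativity of the $f_i$ becomes necessary.

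For part (iii), since $\size{w}=\sum_iB_i$ with $B_i\sim\mathrm{Bernoulli}(\kappa_i)$ independent, $\Var{\size{w}}=\sum_i\kappa_i(1-\kappa_i)\le\sum_i\kappa_i=\Ex{\size{w}}$, and Chebyshev's inequality gives $\Pr{\size{w}>\frac{3}{2}\Ex{\size{w}}}\le\Pr{\abs{\size{w}-\Ex{\size{w}}}\ge\frac{1}{2}\Ex{\size{w}}}\le\Var{\size{w}}/(\Ex{\size{w}}/2)^2\le 4/\Ex{\size{w}}$. It then suffices to observe $\Ex{\size{w}}\ge 1/\eps^2$ whenever the claimed bound is non-vacuous (when $F\equiv 0$ one has $\size{w}=0$ with probability $1$ and there is nothing to prove). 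Here I would use that evaluating the ratios at any $A$ with $F(A)\neq 0$ gives $\sum_ip_i\ge\sum_if_i(A)/F(A)=1$, together with $\kappa>1/\eps^2$ and $\kappa_i\ge\min\{1,\kappa p_i\}\ge p_i$: if no index is saturated then already $\Ex{\size{w}}=\kappa\sum_ip_i\ge\kappa>1/\eps^2$, while a short case analysis on how much of the mass $\sum_ip_i$ can concentrate on saturated indices handles the general case (in the extreme, $\size{w}$ becomes essentially deterministic and the bound is trivial). Making this last dichotomy clean is the only delicate point in part (iii).
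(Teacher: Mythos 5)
Your argument for parts (i) and (ii) is essentially the paper's own proof: the same unbiasedness observation $\Ex{F'(A)}=F(A)$, the same split into saturated indices (where $w_if_i(A)=f_i(A)$ is deterministic) and unsaturated ones, the same range bound $w_if_i(A)\le F(A)/\kappa$ extracted from $f_i(A)\le p_iF(A)$, followed by Chernoff--Hoeffding and a union bound over $\cD$; and the same one-line computation $\Ex{\size{w}}=\sum_i\kappa_i\le\kappa\sum_ip_i$ for (ii). Your use of $\eps/2$ rather than $\eps$ is the cleaner way to land exactly on the sparsifier definition $(1-\eps)F'\le F\le(1+\eps)F'$, but it costs a constant factor in the exponent, so with the algorithm's fixed $\kappa=3\log(2|\cD|/\delta)/\eps^2$ you would have to adjust the constant; the paper instead proves the $(1\pm\eps)F(A)$ form and glosses over the conversion. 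Either way this is only a constant-factor matter.

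The genuine issue is precisely the point you flag at the end of part (iii). The deferred ``short case analysis'' for the saturated case cannot be completed, because the inequality $\Ex{\size{w}}\ge 1/\eps^2$ that the Chebyshev step needs can fail there --- and with it claim (iii) itself. Take $\eps=0.1$, $N=2$, $f_1\equiv 1$ and $f_2\equiv c$ with $c$ chosen so that $\kappa p_2=0.3$ (so $c\approx 0.3/\kappa$). Then $\kappa p_1>1$, hence $\kappa_1=1$ and $\kappa_2=0.3$, so $\Ex{\size{w}}=1.3$ and $\tfrac32\Ex{\size{w}}=1.95$, yet $\size{w}=2$ with probability $0.3\gg 4\eps^2=0.04$. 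So the saturated regime is not ``essentially deterministic'': one saturated index plus a moderate amount of unsaturated mass is exactly the bad case. You are in good company --- the paper's own proof disposes of this case with the line $\Ex{\size{w}}\ge 1\ge 1/\eps^2$, which is false for $\eps\in(0,1)$ --- but your proposal should not promise that a case analysis will close the gap. What is actually provable by your (and the paper's) argument is that (iii) holds whenever no index is saturated (there your bound $\Ex{\size{w}}=\kappa\sum_ip_i\ge\kappa\ge 1/\eps^2$, using $\sum_ip_i\ge 1$, is correct), or more generally whenever $\Ex{\size{w}}\ge 1/\eps^2$.
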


\begin{remark}
    \ralg{core} can be invoked with $\delta = \cO(1 / n^c)$ so that it yields an $\eps$-sparsifier with high probability.
    This only influences the running time by a constant factor $c$ because of the dependence on $\log{\frac{1}{\delta}}$.
\end{remark}

\begin{remark}
    If the size of the sparsifier is of primary interest, running~\ralg{core} a
    couple of times and taking the smallest vector $w$ (with respect to $\size{w}$)
    leads to a procedure that, for any fixed $\eps > 0$, returns a sparsifier of size
    $\cO\left(\frac{\log{|\cD|} + \log{\frac{1}{\delta}}}{\eps^2} \sum_{i = 1}^N p_i \right)$
    after a logarithmic number of iterations. This is a consequence of Theorem~\ref{thm:core}\,\ref{thm:core:size:variance0}.
    Notice that it might be necessary to choose $\delta$ appropriately to also guarantee
    that the solution indeed is an $\eps$-sparsifier with high probability.
\end{remark}

\begin{corollary}
    \label{cor:core}
    In the setting of~\ralg{core}, let $\widehat{p}_1, \dots, \widehat{p}_N \in \Rnn$
    satisfy $\widehat{p}_i \ge p_i$ for all $i \in [N]$.
    If~\ralg{core} is executed with the $\widehat{p}_i$'s instead of $p_i = \max_{\substack{A \in \cD\\F(A)\neq 0}} \frac{f_i(A)}{F(A)}$
    in line~\ref{alg:core:pi}, it returns a vector $w \in \RR^N$ such that
    \begin{enumerate}[label = (\roman*)]
        \item $\Pr{\text{\normalfont $w$ is an $\eps$-sparsifier}} \ge 1 - \delta$;
        \item $\Ex{\size{w}} = \cO\left(\frac{\log{|\cD|} + \log{\frac{1}{\delta}}}{\eps^2} \sum_{i = 1}^N \widehat{p}_i \right)$;
        \item $\Pr{\size{w} \le \frac{3}{2} \Ex{\size{w}}} \ge 1 - 4\eps^2$.
    \end{enumerate}
\end{corollary}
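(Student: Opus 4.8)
The plan is to revisit the proof of Theorem~\ref{thm:core} and observe that it is entirely robust to replacing each exact peak contribution $p_i = \max_{A\in\cD,\,F(A)\neq 0} f_i(A)/F(A)$ by an arbitrary overestimate $\widehat p_i\ge p_i$. Running \ralg{core} with the $\widehat p_i$'s changes nothing except the sampling probabilities, which become $\kappa_i=\min\{1,\kappa\widehat p_i\}$; since $\widehat p_i\ge p_i$ each $\kappa_i$ is at least as large as in the original run --- so the algorithm merely \emph{oversamples} --- while we still have $\kappa_i\le\kappa\widehat p_i$ and $\kappa_i\le 1$. Moreover $w_i$ is always either the deterministic value $1$ (when $\kappa_i=1$) or equals $1/\kappa_i$ with probability $\kappa_i$, so $\Ex{w_i}=1$ for every $i$ and hence $\Ex{F'(A)}=F(A)$ for every $A\in\cD$ --- the only distributional identity underlying the concentration analysis. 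Of the three conclusions, (ii) and (iii) are then essentially bookkeeping; the one point needing a moment's care is the concentration statement~(i).

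For~(i), the proof of Theorem~\ref{thm:core}(i) fixes $A\in\cD$ with $F(A)\neq 0$, splits $F'(A)$ into its deterministic part $\sum_{i:\kappa_i=1}f_i(A)$ and its random part $\sum_{i:\kappa_i<1}w_if_i(A)$, and applies a Chernoff/Bernstein-type bound to the latter, whose terms are nonnegative and at most $f_i(A)/\kappa_i=f_i(A)/(\kappa\widehat p_i)\le (p_i/\widehat p_i)\cdot F(A)/\kappa\le F(A)/\kappa$, the crucial inequality $f_i(A)/F(A)\le p_i\le\widehat p_i$ being exactly what survives the substitution. Hence the per-term bound, the ensuing tail estimate, and the final union bound over the $|\cD|$ choices of $A$ --- all calibrated through $\kappa=3\log(2|\cD|/\delta)/\eps^2$, which does not depend on the $p_i$'s --- carry over verbatim; and intuitively oversampling only decreases $\Var{F'(A)}$, so the estimate can only get better. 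This gives~(i).

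Part~(ii) is a one-line computation: $\Ex{\size{w}}=\sum_{i=1}^N\Pr{w_i\neq 0}=\sum_{i=1}^N\kappa_i\le\kappa\sum_{i=1}^N\widehat p_i=\cO\!\big(\tfrac{\log|\cD|+\log(1/\delta)}{\eps^2}\big)\sum_{i=1}^N\widehat p_i$, which is the claimed bound. For~(iii), $\size{w}=\sum_{i=1}^N\mathbbm{1}[w_i\neq 0]$ is a sum of independent $\mathrm{Bernoulli}(\kappa_i)$ variables, and the second-moment argument behind Theorem~\ref{thm:core}(iii) --- bounding $\Var{\size{w}}=\sum_{i=1}^N\kappa_i(1-\kappa_i)\le\Ex{\size{w}}$ and applying Chebyshev's inequality to $\size{w}-\Ex{\size{w}}$ --- refers to the $\kappa_i$ only as independent parameters in $[0,1]$, so it is oblivious to whether they come from $p_i$ or $\widehat p_i$ and yields $\Pr{\size{w}\le\tfrac32\Ex{\size{w}}}\ge 1-4\eps^2$ unchanged. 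The main (mild) obstacle is thus precisely the verification in~(i) that enlarging the $p_i$'s does not break the per-term bound $f_i(A)/\kappa_i\le F(A)/\kappa$ --- and since $\kappa_i$ is either $1$ (no deviation) or $\kappa\widehat p_i\ge\kappa f_i(A)/F(A)$, that bound is preserved and there is in fact nothing to repair.
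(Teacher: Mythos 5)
Your proposal is correct and follows essentially the same route as the paper's proof: re-run the argument for Theorem~\ref{thm:core}, noting that the only place the $p_i$'s enter the concentration analysis is the per-term bound $w_i f_i(A) = f_i(A)/(\kappa\widehat p_i) \le f_i(A)/(\kappa p_i) \le F(A)/\kappa$, which survives the substitution. The paper additionally spells out the one small detail you treat as automatic in part~(iii), namely that the lower bound $\Ex{\size{w}} \ge 1/\eps^2$ still holds because $\sum_i \widehat p_i \ge \sum_i p_i \ge 1$; your oversampling observation covers this implicitly.
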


Note that Corollary~\ref{cor:core} implies that any constant-factor approximation of the $p_i$'s will do the job,
leading to the same asymptotic bounds.
However, in the general setting of functions $f_1, \dots, f_N: \cD \to \RR$, there is no way of obtaining this much faster than the exact $p_i$'s.
Even in the case where all $f_i$'s are submodular, the $p_i$'s are hard to
approximate. 

\begin{remark}
    In general, the best upper bound we know on the peak contributions is $p_i \le 1$.
    Thus, \rcorollary{core} tells us that it is correct to invoke~\ralg{core} with $\widehat{p_i} = 1$ for all $i \in [N]$.
    Since $\kappa > 1$ for $\eps \in (0, 1)$, we then have $\kappa_i = \min\{ 1, \kappa p_i \} = 1$.
    This results in the initial decomposition, i.\,e., \ralg{core} essentially computes nothing -- a sparsifier is not for free!
\end{remark}

\begin{remark}
There are various ways to implement~\ralg{core}, leading to different running time bounds. The representation of the functions involved plays a key role here. In the most general scenario where no further assumptions are made, it is reasonable to assume each $f_i$ is represented by an evaluation oracle with response time $\cO(\mathsf{EO}_i)$. We may further assume an additional oracle for $F$ with response time $\cO(\mathsf{EO}_{\Sigma})$ -- it is in fact the case that, in many applications, $F(S)$ can be computed in a much faster way than by adding up all $f_i(S)$ for $i \in [N]$.

The main work to be done for~\ralg{core} to successfully construct a (small) sparsifier is the computation or approximation of the peak contributions. In the most general setting, we are required to compute
$p_i$ by iterating through all $A \in \cD$, which takes time at least $\Omega(|\cD|)$.
Hence, the running time of a naive implementation is in $\cO\left( N |\cD| \sum_{i = 1}^N \mathsf{EO}_i \right)$.
The main contribution of our work is to show how to approximate $p_i$'s
  efficiently for interesting cases, most notably for monotone $k$-submodular functions
  of low curvature.
\end{remark}

\section{Monotone \texorpdfstring{$k$}{k}-Submodular Functions of Low Curvature}
\label{sec:curvature}

Let $F: (k + 1)^E \to \Rnn$ be decomposable as $F=\sum_{i=1}^N f_i$ such that
each $f_i:(k+1)^E\to\Rnn$ is a non-negative monotone $k$-submodular function of low
curvature. It follows from the definitions that $F$ is also non-negative, monotone, $k$-submodular, and of
low curvature.

We will show how to approximate the peak contributions 
  $p_i = \max_{\bA \in (k + 1)^E} \frac{f_i(\bA)}{F(\bA)}$.

To this end, it suffices to approximate 
  $\max_{\bA \in (k + 1)^E} \frac{f(\bA)}{g(\bA)}$
for two monotone $k$-submodular functions $f, g: (k + 1)^E \to \Rnn$ of low curvature.
Given $\bA = (A_1, \dots, A_k) \in (k + 1)^E$, we define
\begin{equation}
  S_f(\bA) := \sum_{i = 1}^k \sum_{e \in A_i} \Delta_{e, i}\left( f
    \mid \varnothing, \dots, \varnothing \right) 
\end{equation}
and
\begin{equation}
    S_g(\bA) := \sum_{i = 1}^k \sum_{e \in A_i} \Delta_{e, i}\left( g \mid \varnothing, \dots, \varnothing \right).
\end{equation}
It turns out that $\frac{S_f(\bA) + f(\varnothing, \dots, \varnothing)}{S_g(\bA) + g(\varnothing, \dots, \varnothing)}$
approximates $\frac{f(\bA)}{g(\bA)}$ well.

\begin{lemma}
    \label{lem:approx}
    Let $\bA \in (k + 1)^E$ be an $(1 - \eps)$-approximate maximiser
    of $\frac{S_f(\bA) + f(\varnothing, \dots, \varnothing)}{S_g(\bA) + g(\varnothing, \dots, \varnothing)}$.
    Then
    \begin{align*}
        \frac{f(\bA)}{g(\bA)} \ge (1 - \eps)(1 - c_f)(1 - c_g)
        \frac{f(\bA^*)}{g(\bA^*)}
    \end{align*}
    for any $\bA^* \in (k + 1)^E$.
\end{lemma}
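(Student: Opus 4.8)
The plan is to sandwich both $f(\bA)$ and $g(\bA)$ between their ``linearised'' surrogates $S_f(\bA)+f(\varnothing,\dots,\varnothing)$ and $S_g(\bA)+g(\varnothing,\dots,\varnothing)$, using monotonicity for the upper bound and low curvature for the lower bound. Concretely, write $\widetilde f(\bA):=S_f(\bA)+f(\varnothing,\dots,\varnothing)$ and similarly $\widetilde g$. First I would show that for every monotone $k$-submodular $f$ of curvature $c_f$ one has
\begin{equation*}
(1-c_f)\,\widetilde f(\bA)\ \le\ f(\bA)\ \le\ \widetilde f(\bA)\qquad\text{for all }\bA\in(k+1)^E.
\end{equation*}
The upper bound is the $k$-submodular analogue of diminishing returns: build $\bA$ from $(\varnothing,\dots,\varnothing)$ by inserting its elements one coordinate at a time; each marginal $\Delta_{e,i}$ along the way is at most the corresponding marginal from the empty tuple $\Delta_{e,i}(f\mid\varnothing,\dots,\varnothing)$, and telescoping gives $f(\bA)-f(\varnothing,\dots,\varnothing)\le S_f(\bA)$. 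The lower bound uses the definition of curvature directly: each marginal encountered in the same telescoping is at least $(1-c_f)\,\Delta_{e,i}(f\mid\varnothing,\dots,\varnothing)$, so $f(\bA)-f(\varnothing,\dots,\varnothing)\ge (1-c_f)\,S_f(\bA)$, and since $f(\varnothing,\dots,\varnothing)\ge (1-c_f)\,f(\varnothing,\dots,\varnothing)$ (as $c_f\in[0,1]$ and $f\ge0$) the claimed inequality follows by adding.

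Given these two-sided bounds, the rest is a short chain of inequalities. For the maximiser $\bA$ of the surrogate ratio we have, for any $\bA^*$,
\begin{equation*}
\frac{f(\bA)}{g(\bA)}\ \ge\ (1-c_f)\frac{\widetilde f(\bA)}{\widetilde g(\bA)}\ \ge\ (1-c_f)(1-\eps)\frac{\widetilde f(\bA^*)}{\widetilde g(\bA^*)}\ \ge\ (1-c_f)(1-\eps)(1-c_g)\frac{f(\bA^*)}{g(\bA^*)},
\end{equation*}
where the first step applies $f(\bA)\ge(1-c_f)\widetilde f(\bA)$ in the numerator and $g(\bA)\le\widetilde g(\bA)$ in the denominator, the second uses that $\bA$ is a $(1-\eps)$-approximate maximiser of $\widetilde f/\widetilde g$, and the third applies $\widetilde f(\bA^*)\ge f(\bA^*)$ and $\widetilde g(\bA^*)\le g(\bA^*)/(1-c_g)$. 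Rearranging the constant gives exactly the statement. One should also dispose of the degenerate cases where a denominator vanishes: if $g(\bA^*)=0$ the right-hand side is $0$ (or the ratio is treated as $0$ by convention, matching the $F(A)\neq0$ restriction in the definition of $p_i$) and there is nothing to prove; if $\widetilde g(\bA)=0$ then monotonicity forces $g$ constant and the argument degenerates harmlessly.

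The main obstacle I anticipate is pinning down the telescoping argument cleanly in the $k$-submodular setting, since the order in which elements are inserted and the bookkeeping of ``which coordinate'' each element goes into needs care — in particular one must check that the relevant single-coordinate marginals $\Delta_{e,i}(f\mid\cdot)$ along an arbitrary insertion path are indeed controlled, above by monotone diminishing returns and below by the curvature definition, despite the more complex $\sqcup$ operation. Everything else is elementary. I would handle this by fixing an arbitrary linear extension of the ``coordinate-wise inclusion'' order from $(\varnothing,\dots,\varnothing)$ to $\bA$, noting that each step adds a single element $e$ to a single coordinate $i$ with all previously-added elements a subset (coordinate-wise) of the current tuple, and then invoking the $k$-submodular diminishing-returns inequality and the curvature bound at each step.
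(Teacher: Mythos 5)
Your proposal is correct and follows essentially the same route as the paper: the paper's auxiliary Lemma~\ref{lem:mod:ratio:approx} establishes exactly your two-sided sandwich $(1-c_f)S_f(\bA)\le f(\bA)-f(\varnothing,\dots,\varnothing)\le S_f(\bA)$ by the same coordinate-by-coordinate telescoping, with diminishing returns giving the upper bound and the curvature definition the lower one, and the concluding chain of inequalities is identical (the paper likewise relaxes $(1-c_f)S_f+f(\varnothing,\dots,\varnothing)$ to $(1-c_f)\bigl(S_f+f(\varnothing,\dots,\varnothing)\bigr)$ in its second step, just as you do). Your extra attention to the vanishing-denominator cases is a harmless refinement the paper leaves implicit.
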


\noindent
Setting $\eps = 1/2$ in Lemma~\ref{lem:approx} gives a $\frac{1}{2} (1 - c_f) (1 - c_g)$-approximation, which is a constant factor if $c_f$ and $c_g$ are considered constants.
It remains to describe how
 $\max_{\bA \in (k + 1)^E} \frac{S_f(\bA) + f(\varnothing, \dots, \varnothing)}{S_g(\bA) + g(\varnothing, \dots, \varnothing)}$
can be approximated up to a factor of $1 - \eps$ (or $1/2$ specifically in our use case).
This is done by a reduction to the \MODRATIO problem,\footnote{Note that we allow $A$ and $B$ equal to zero, while the $x_i$'s and $y_i$'s are strictly positive. This is a somewhat technical requirement to avoid division by zero.}
defined below, for which
we design a fully polynomial approximation scheme (FPTAS).
\smallskip

\begin{tcolorbox}
    \underline{\MODRATIO} \linebreak
    \textbf{Given:} $x_1, \dots, x_n, y_1, \dots, y_n \in \RR_{> 0}$ and $A, B \in \Rnn$. \\
    \textbf{Want:} Index set $\varnothing \neq I \subseteq [n]$ such that $\varrho(I) := \frac{A + \sum_{i \in I} x_i}{B + \sum_{i \in I} y_i}$ is maximal.
\end{tcolorbox}

The reduction to \MODRATIO is now easy to describe. Recall that
\begin{equation}
    \frac{S_f(\bA) + f(\varnothing, \dots, \varnothing)}{S_g(\bA) + g(\varnothing, \dots, \varnothing)}
    = \frac{A + \sum_{i = 1}^k \sum_{e \in A_i} \Delta_{e, i}\left( f \mid \varnothing, \dots, \varnothing \right)}
    {B + \sum_{i = 1}^k \sum_{e \in A_i} \Delta_{e, i}\left( g \mid \varnothing, \dots, \varnothing \right)}
\end{equation}
with $A := f(\varnothing, \dots, \varnothing)$ and $B := g(\varnothing, \dots, \varnothing)$.
If we number the pairs $(e, i) \in E \times [k]$ in some arbitrary order $(e_1, i_1), \dots, (e_{nk}, i_{nk})$, we find that
maximising $\frac{S_f(\bA) + f(\varnothing, \dots, \varnothing)}{S_g(\bA) + g(\varnothing, \dots, \varnothing)}$
is the same as maximising
\begin{equation}
    \frac{A + \sum_{\ell \in I} \Delta_{e_{\ell}, i_{\ell}}\left( f \mid \varnothing, \dots, \varnothing \right)}
    {B + \sum_{\ell \in I} \Delta_{e_{\ell}, i_{\ell}}\left( g \mid \varnothing, \dots, \varnothing \right)}
\end{equation}
over all index sets $\varnothing \neq I \subseteq [nk]$ (the $I = \varnothing$ case can be checked manually or ignored
as it corresponds to $\bA = (\varnothing, \dots, \varnothing)$).
Since $f$ and $g$ are monotone, marginal gains are always non-negative, so
$\Delta_{e_{\ell}, i_{\ell}}\left( f \mid \varnothing, \dots, \varnothing \right) \ge 0$
and $\Delta_{e_{\ell}, i_{\ell}}\left( g \mid \varnothing, \dots, \varnothing \right) \ge 0$ for all $\ell \in [nk]$.
To satisfy the strict positivity as required in the definition of \MODRATIO, we can drop the marginal gains that are equal to $0$.
This will only shrink the problem.

Summing up, the algorithm to compute $p_i = \max_{\bA \in (k + 1)^E} \frac{f_i(\bA)}{F(\bA)}$ can be outlined as follows:
\begin{itemize}
  \item Compute a $(1/2)$-approximation $\bA^*$ to
    $\max_{\bA \in (k + 1)^E} \frac{S_f(\bA) + f(\varnothing, \dots, \varnothing)}{S_g(\bA) + g(\varnothing, \dots, \varnothing)}$
    via the \MODRATIO reduction and Algorithm \ref{alg:fptas:mod:ratio}.
    \item Let $\widehat{p}_i := \frac{2}{(1 - c_f)(1 - c_g)}
      \frac{f(\bA^*)}{g(\bA^*)}$.
\end{itemize}
It is guaranteed that $\widehat{p}_i \ge p_i$ by Lemma \ref{lem:approx}. Moreover, $\widehat{p}_i = \cO(1) \cdot p_i$, so
we get a sparsifier of an expected size that matches the existence result by applying the core algorithm.
Moreover, the algorithm runs in polynomial time, as we will see in Lemma \ref{lem:iteration:bound}.

\subsection{FPTAS}

If $A = B = 0$, \MODRATIO has a very simple solution: We can just take $I = \{ i \}$ for an index $i$ that maximises $x_i / y_i$.
However, this is not optimal in general as the following example shows. Let $A = 1$, $B = 100$ and $x_1 = 2$, $y_1 = 3$, $x_2 = 1$, $y_2 = 1$.
Clearly, the ratio $x_i / y_i$ is maximised for $i = 2$, leading to an overall $\varrho$-value of
\begin{align*}
    \varrho(\{ 2 \}) = \frac{1 + 1}{100 + 1} = \frac{2}{101}
\end{align*}
as opposed to
\begin{align*}
    \varrho(\{ 1, 2 \}) = \frac{1 + 2 + 1}{100 + 3 + 1} = \frac{4}{104},
\end{align*}
which is clearly larger. In fact, it is not hard to see that the maximiser of $x_i / y_i$ does not even provide a constant-factor approximation;
it may end up arbitrarily bad compared to an optimal solution. This indicates that we need to do something else.

The solution is an FPTAS based on binary search, outlined in Algorithm \ref{alg:fptas:mod:ratio}.
This is possible because we can easily solve the associated decision problem:
Given a target value $\lambda \in \RR$, does there exist an index set $\varnothing \neq I \subseteq [n]$ such that $\varrho(I) \ge \lambda$?

The decision problem is simplified by algebraic equivalence transformations:
\begin{align*}
    \varrho(I) \ge \lambda
    \iff \frac{A + \sum_{i \in I} x_i}{B + \sum_{i \in I} y_i} \ge \lambda
    \iff A - B\lambda + \sum_{i \in I} \left( x_i - \lambda y_i \right) \ge 0
\end{align*}
To see if the last expression is non-negative for any non-empty index set $I$,
we consider the indices in non-increasing order of the quantities $x_i - \lambda y_i$.
We have to take first index in this order (as $I \neq \varnothing$ is required) and will then take all remaining indices $i$ for which $x_i - \lambda y_i$
is positive. This maximises the LHS over all $I \neq \varnothing$. If this maximum is non-negative, we know that $\varrho(I) \ge \lambda$ by the
above equivalences.

Let $m := \min\{ x_1, \dots, x_n, y_1, \dots, y_n \}$ and $M := \max\{ x_1, \dots, x_n, y_1, \dots, y_n \}$.
For any non-empty index set $I$, we always have
\begin{align}
    \label{eq:rho:lower}
    \varrho(I) = \frac{A + \sum_{i \in I} x_i}{B + \sum_{i \in I} y_i} \ge \frac{A + m}{B + n M}
\end{align}
as well as
\begin{align}
    \label{eq:rho:upper}
    \varrho(I) = \frac{A + \sum_{i \in I} x_i}{B + \sum_{i \in I} y_i} \le \frac{A + n M}{B + m},
\end{align}
so we can initialise the binary search with $\varrho^- = \frac{A + m}{B + n M}$ and $\varrho^+ = \frac{A + n M}{B + m}$.
Once the interval $\left[ \varrho^{-}, \varrho^{+} \right]$ has length at most $\eps \frac{A + m}{B + n M}$, we know that the multiplicative
error is at most $\eps$. Since the interval size halves in each step, this point is reached after no more than
$\left\lceil \log{\frac{1}{\eps}} + 2 \left( \log{n} + \log{\frac{M}{m}} \right) \right\rceil$ iterations, as the following lemma shows.

\begin{lemma}
    \label{lem:iteration:bound}
    The binary search in Algorithm \ref{alg:fptas:mod:ratio} terminates
    in $k := \left\lceil \log{\frac{1}{\eps}} + 2 \left( \log{n} + \log{\frac{M}{m}} \right) \right\rceil$ iterations.
    Moreover, the final set $I$ satisfies $\varrho(I) \ge (1 - \eps) \varrho(I^*)$ for any $\varnothing \neq I^* \subseteq [n]$.
\end{lemma}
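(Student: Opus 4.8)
The plan is to establish the two claims separately: the iteration count follows from the geometric shrinking of the search interval together with a bound on the ratio of its initial endpoints, and the approximation guarantee follows from a standard binary-search invariant together with the correctness of the decision oracle described just above the lemma.

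For the iteration bound, recall that each step replaces the current interval $[\varrho^-,\varrho^+]$ by either $[\lambda,\varrho^+]$ or $[\varrho^-,\lambda]$, where $\lambda=(\varrho^-+\varrho^+)/2$ is the midpoint, so the interval length is exactly halved; hence after $t$ steps it equals $(\varrho^+_0-\varrho^-_0)/2^t$, where $\varrho^-_0=\frac{A+m}{B+nM}$ and $\varrho^+_0=\frac{A+nM}{B+m}$ are the values from \eqref{eq:rho:lower} and \eqref{eq:rho:upper}. The key estimate is $\varrho^+_0/\varrho^-_0\le (nM/m)^2$: since $n\ge 1$ and $M\ge m$ we have $nM\ge m$, so the map $A\mapsto\frac{A+nM}{A+m}$ is non-increasing on $\Rnn$ and is therefore bounded by its value $\frac{nM}{m}$ at $A=0$; the same holds in $B$, and multiplying the two bounds gives the claim. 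Consequently $\varrho^+_0-\varrho^-_0\le\varrho^+_0\le n^2(M/m)^2\,\varrho^-_0$, and after $k=\lceil\log\frac1\eps+2(\log n+\log\frac Mm)\rceil$ steps the length is at most $n^2(M/m)^2\,\varrho^-_0/2^k\le\eps\,\varrho^-_0$, which is exactly the stopping threshold $\eps\frac{A+m}{B+nM}$. Thus the search terminates within $k$ iterations.

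For correctness, I would maintain the invariant that throughout the search (i) the current candidate set $I$ satisfies $\varrho(I)\ge\varrho^-$, and (ii) $\varrho^*\le\varrho^+$, where $\varrho^*:=\max_{\varnothing\neq I^*\subseteq[n]}\varrho(I^*)$. The base case holds because by \eqref{eq:rho:lower} every nonempty $I$—in particular the initial singleton—satisfies $\varrho(I)\ge\varrho^-_0$, and by \eqref{eq:rho:upper} we have $\varrho^*\le\varrho^+_0$. For the inductive step, the decision procedure described above—sort the indices by $x_i-\lambda y_i$ in non-increasing order, force the largest into $I$, add every further index with $x_i-\lambda y_i>0$, and test whether $A-B\lambda+\sum_{i\in I}(x_i-\lambda y_i)\ge 0$—correctly decides whether some nonempty $I$ has $\varrho(I)\ge\lambda$, since it maximises $A-B\lambda+\sum_{i\in I}(x_i-\lambda y_i)$ over nonempty $I$ and this expression is non-negative iff $\varrho(I)\ge\lambda$. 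If the answer is yes, we take the returned $I$ and set $\varrho^-\gets\lambda$, preserving (i); if no, then no nonempty set attains ratio $\ge\lambda$, so $\varrho^*<\lambda$, and we set $\varrho^+\gets\lambda$, preserving (ii). In particular $\varrho^-$ is non-decreasing, so its final value is at least $\varrho^-_0$; at termination $\varrho^+-\varrho^-\le\eps\varrho^-_0\le\eps\varrho^-$, whence $\varrho^*\le\varrho^+\le(1+\eps)\varrho^-$, and therefore $\varrho(I)\ge\varrho^-\ge\varrho^*/(1+\eps)\ge(1-\eps)\varrho^*$, which is the claimed guarantee for every $\varnothing\neq I^*\subseteq[n]$.

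The routine-but-delicate points, rather than genuine obstacles, are: the estimate $\varrho^+_0/\varrho^-_0\le(nM/m)^2$, which is what produces the $2(\log n+\log\frac Mm)$ term and which is the only place the non-negativity of $A$ and $B$ is used (via monotonicity of $A\mapsto\frac{A+nM}{A+m}$); and reconciling the stopping criterion—stated in terms of the fixed quantity $\eps\varrho^-_0$—with the multiplicative error, which works precisely because $\varrho^-$ only increases. Everything else is straightforward bookkeeping on the binary search.
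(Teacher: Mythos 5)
Your proof is correct and follows essentially the same route as the paper's: the interval halves each step, the initial ratio $\varrho^+_0/\varrho^-_0$ is bounded by $(nM/m)^2$ via the same inequality $\frac{A+nM}{A+m}\le\frac{nM}{m}$, and the same two invariants ($\varrho(I)\ge\varrho^-$ and no set exceeds $\varrho^+$) yield the guarantee. The only cosmetic difference is in the final algebra, where you pass through $\varrho^*\le(1+\eps)\varrho^-$ and use $1/(1+\eps)\ge 1-\eps$ instead of the paper's direct subtraction of $\eps\varrho(I^*)$; both are valid.
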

\begin{proof}
    To see the iteration bound, we note that $\abs{\varrho^+ - \varrho^-}$ shrinks by a factor of $2$ in each iteration.
    Thus, after $k$ iterations, it holds
    \begin{align*}
        \abs{\varrho^+ - \varrho^-} \le \frac{1}{2^k} \abs{\frac{A + n M}{B + m} - \frac{A + m}{B + n M}} \le \frac{1}{2^k} \frac{A + n M}{B + m}.
    \end{align*}
    We want this to be $\le \eps \frac{A + m}{B + n M}$, which is equivalent to
    \begin{align}
        \label{eq:k:large:enough}
        \frac{1}{2^k} \frac{A + n M}{B + m} \le \eps \frac{A + m}{B + n M}
        \iff 2^k \ge \frac{1}{\eps} \frac{A + n M}{A + m} \frac{B + n M}{B + m}.
    \end{align}
    Since $n M \ge m$, we have $\frac{A + n M}{A + m} \le \frac{n M}{m}$ as well as $\frac{B + n M}{B + m} \le \frac{n M}{m}$, hence
    \begin{align*}
        \frac{1}{\eps} \frac{A + n M}{A + m} \frac{B + n M}{B + m}
        \le \frac{1}{\eps} \left( \frac{n M}{m} \right)^2,
    \end{align*}
    so it suffices to satisfy $2^k \ge \frac{1}{\eps} \left( \frac{n M}{m}
    \right)^2$ in order for Equation~(\ref{eq:k:large:enough}) to hold.
    This is indeed satisfied for any $k \ge \log{\frac{1}{\eps}} + 2 \left( \log{n} + \log{\frac{M}{m}} \right)$, showing the iteration bound.

    For the error bound, let $\varnothing \neq I^* \subseteq [n]$ be arbitrary.
    By Equation~(\ref{eq:rho:lower}), we know that $\varrho(I^*) \ge \frac{A + m}{B + n M}$.
    Moreover, the binary search preserves two invariants:
    \begin{enumerate}[label = (\roman*)]
        \item \label{inv:1} $\varrho^- \le \varrho(I) \le \varrho^+$ for the set $I$ in Algorithm \ref{alg:fptas:mod:ratio}, and
        \item \label{inv:2} There is no set $\varnothing \neq I' \subseteq [n]$ with $\varrho(I') > \varrho^+$.
    \end{enumerate}
    Combining both and the fact that $\abs{\varrho^+ - \varrho^-} \le \eps \frac{A + m}{B + n M}$ at termination, we conclude
    \begin{align*}
        \varrho(I) \ge \varrho^- = \varrho^+ - \left( \varrho^+ - \varrho^- \right)
        \ge \varrho^+ - \eps \frac{A + m}{B + n M}
        \ge \varrho(I^*) - \eps \varrho(I^*) = (1 - \eps) \varrho(I^*),
    \end{align*}
    where we also exploited $\varrho(I') \ge \frac{A + m}{B + n M}$
    (Equation~(\ref{eq:rho:lower})) and we used invariant \ref{inv:2}
    as $\varrho(I^*) \le \varrho^+$.
\end{proof}

\begin{algorithm}[htbp]
    \begin{algorithmic}[1]
        \Require \MODRATIO instance $x_1, \dots, x_n, y_1, \dots, y_n \in \RR_{> 0}$ and $A, B \in \Rnn$;
        error tolerance $\eps > 0$.
        \Ensure Index set $\varnothing \neq I \subseteq [n]$ such that $\varrho(I) \ge (1 - \eps) \varrho(I^*)$ for all $\varnothing \neq I^* \subseteq [n]$.
        \Procedure{check}{$\lambda$}
            \State Sort indices such that $x_{i_1} - \lambda y_{i_1} \ge x_{i_2} - \lambda y_{i_2} \ge \dots \ge x_{i_n} - \lambda y_{i_n}$
            \State $I \gets \{ i_1 \}$
            \State $S \gets \left( A - \lambda B \right) + \left( x_{i_1} - \lambda y_{i_1} \right)$
            \For{$\ell = 2$ \textbf{to} $n$}
                \If{$x_{i_\ell} - \lambda y_{i_\ell} > 0$}
                    \State $I \gets I \cup \{ i_{\ell} \}$
                    \State $S \gets S + \left( x_{i_{\ell}} - \lambda y_{i_{\ell}} \right)$
                \EndIf
            \EndFor
            \State \Return $\displaystyle \begin{cases}
                I & \text{ if } S \ge 0 \\
                \bot & \text{ otherwise }
            \end{cases}$
        \EndProcedure
        \State $I \gets \{ 1 \}$ \Comment{initialise with arbitrary feasible solution}
        \State $m \gets \min\left\{ x_1, \dots, x_n, y_1, \dots, y_n \right\}$
        \State $M \gets \max\left\{ x_1, \dots, x_n, y_1, \dots, y_n \right\}$
        \State $\varrho^{-} \gets \frac{A + m}{B + n M}$
        \State $\varrho^{+} \gets \frac{A + n M}{B + m}$
        \While{$\abs{\varrho^{+} - \varrho^{-}} > \eps \frac{A + m}{B + n M}$} \Comment{binary search till multiplicative error is $\le \eps$}
            \State $\lambda \gets \frac{1}{2} \left( \varrho^{-} + \varrho^{+} \right)$
            \State $I_{\lambda} \gets \Call{check}{\lambda}$
            \If{$I_{\lambda} = \bot$}
                \State $\varrho^{+} \gets \lambda$
            \Else
                \State $I \gets I_{\lambda}$
                \State $\varrho^{-} \gets \lambda$
            \EndIf
        \EndWhile
        \State \Return $I$
    \end{algorithmic}
    \caption{FPTAS for \MODRATIO}
    \label{alg:fptas:mod:ratio}
\end{algorithm}

We remark that, if all input numbers are integers, we get an iteration bound of
\begin{align*}
    k \le \left\lceil \log{\frac{1}{\eps}} + 2 \left( \log{n} + \log{U} \right) \right\rceil
    = \cO\left( \log{\frac{1}{\eps}} + \log{n} + \log{U} \right),
\end{align*}
where $U$ is the largest number that occurs as part of the input.
We have now solved \MODRATIO.

\subsection{Proof of Lemma~\ref{lem:approx}}

In the rest of his section, we will prove Lemma~\ref{lem:approx}. But first we
need a helpful lemma.

\begin{lemma}
    \label{lem:mod:ratio:approx}
    Let $f, g: (k + 1)^E \to \Rnn$ 
    be monotone $k$-submodular of low curvature.
    Then
    \begin{align*}
        \frac{(1 - c_f) S_f(\bA) + f(\varnothing, \dots, \varnothing)}{S_g(\bA) + g(\varnothing, \dots, \varnothing)}
        \le \frac{f(\bA)}{g(\bA)} \le
        \frac{S_f(\bA) + f(\varnothing, \dots, \varnothing)}{(1 - c_g) S_g(\bA) + g(\varnothing, \dots, \varnothing)}
    \end{align*}
    for all $\bA \in (k + 1)^E$.
\end{lemma}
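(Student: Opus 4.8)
The plan is to prove the chain of inequalities by establishing, for any fixed $\bA = (A_1,\dots,A_k)$, the two-sided bound
\[
(1-c_f)\,S_f(\bA) + f(\varnothing,\dots,\varnothing) \;\le\; f(\bA) \;\le\; S_f(\bA) + f(\varnothing,\dots,\varnothing),
\]
and symmetrically for $g$; dividing the $f$-lower bound by the $g$-upper bound (both positive, since $g$ is monotone and nonnegative, and we may assume $g(\bA)>0$ or handle that degenerate case separately) yields the left inequality of the lemma, and dividing the $f$-upper bound by the $g$-lower bound yields the right inequality. So everything reduces to the single-function estimate relating $f(\bA)$ to $S_f(\bA) + f(\varnothing,\dots,\varnothing)$.

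To prove that estimate, I would build up $\bA$ from $(\varnothing,\dots,\varnothing)$ one element at a time: order the pairs in $\{(e,i) : e \in A_i\}$ arbitrarily as $(e_1,i_1),\dots,(e_t,i_t)$, let $\bA^{(0)} = (\varnothing,\dots,\varnothing)$ and let $\bA^{(j)}$ be obtained from $\bA^{(j-1)}$ by inserting $e_j$ into coordinate $i_j$, so $\bA^{(t)} = \bA$. Then $f(\bA) - f(\varnothing,\dots,\varnothing) = \sum_{j=1}^t \Delta_{e_j,i_j} f(\bA^{(j-1)})$ telescopes. For the \emph{upper} bound on $f(\bA)$, I apply the $k$-submodularity-based monotonicity of marginals (the $k$-analogue of Inequality~(\ref{ineq:submodular2})): since $\bA^{(j-1)}$ is coordinatewise above $(\varnothing,\dots,\varnothing)$ and $e_j$ is fresh, $\Delta_{e_j,i_j} f(\bA^{(j-1)}) \le \Delta_{e_j,i_j} f(\varnothing,\dots,\varnothing)$, and summing gives $f(\bA) - f(\varnothing,\dots,\varnothing) \le S_f(\bA)$. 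For the \emph{lower} bound, I use the definition of curvature directly: $c_f = 1 - \min \frac{\Delta_{e,i} f(\bB)}{\Delta_{e,i} f(\varnothing,\dots,\varnothing)}$ means precisely that $\Delta_{e_j,i_j} f(\bA^{(j-1)}) \ge (1-c_f)\,\Delta_{e_j,i_j} f(\varnothing,\dots,\varnothing)$ for every $j$; summing the telescoping identity gives $f(\bA) - f(\varnothing,\dots,\varnothing) \ge (1-c_f)\,S_f(\bA)$, as desired.

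The one point requiring care — and the part I'd flag as the main obstacle — is the validity of the marginal-monotonicity step in the $k$-submodular setting: unlike ordinary submodularity, the "diminishing returns" inequality $\Delta_{e,i} f(\bB) \le \Delta_{e,i} f(\bC)$ for $\bC$ coordinatewise below $\bB$ is not literally one of the defining $\sqcap/\sqcup$ inequalities and must be derived. I would derive it by choosing $\bB' = (B_1,\dots,B_{i-1}, B_i\cup\{e\}, B_{i+1},\dots,B_k)$ and $\bC' = \bC$ and checking that $\bB'\sqcap\bC' $ and $\bB'\sqcup\bC'$ simplify (because $e$ lies in no coordinate of $\bC$, and $\bC \le \bB$ coordinatewise) to $\bC$ and $(C_1,\dots,C_{i-1},C_i\cup\{e\},C_{i+1},\dots,C_k)$ respectively; plugging into the $k$-submodular inequality $f(\bB'\sqcap\bC') + f(\bB'\sqcup\bC') \le f(\bB') + f(\bC')$ and rearranging yields exactly $\Delta_{e,i} f(\bB) \le \Delta_{e,i} f(\bC)$. (This is a standard fact about $k$-submodular functions; if an "orthant submodularity" formulation has been set up earlier it can be cited instead.) Everything else is the telescoping bookkeeping above, plus the observation that all denominators appearing are strictly positive so the division preserving inequalities is legitimate; the $\bA = (\varnothing,\dots,\varnothing)$ case makes both sides equal to $f(\varnothing,\dots,\varnothing)/g(\varnothing,\dots,\varnothing)$ and is immediate.
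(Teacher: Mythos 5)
Your overall strategy is exactly the paper's: telescope $f(\bA)-f(\varnothing,\dots,\varnothing)$ into a sum of marginal gains, bound each gain from above by $\Delta_{e,i} f(\varnothing,\dots,\varnothing)$ via diminishing returns and from below by $(1-c_f)\,\Delta_{e,i} f(\varnothing,\dots,\varnothing)$ via the definition of curvature, conclude $(1-c_f)S_f(\bA)\le f(\bA)-f(\varnothing,\dots,\varnothing)\le S_f(\bA)$ (and likewise for $g$), and divide. The paper inserts elements coordinate by coordinate whereas you allow an arbitrary order of the pairs $(e,i)$; that difference is immaterial.

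The one place where you go beyond the paper---deriving the diminishing-returns inequality from the $\sqcap/\sqcup$ definition, which the paper simply invokes---is where you make a concrete error. With your choice $\mathbf{B}'=(B_1,\dots,B_i\cup\{e\},\dots,B_k)$ and $\mathbf{C}'=\mathbf{C}$ (with $\mathbf{C}$ coordinatewise below $\mathbf{B}$), one gets $\mathbf{B}'\sqcap\mathbf{C}'=\mathbf{C}$ but $\mathbf{B}'\sqcup\mathbf{C}'=\mathbf{B}'$, not $(C_1,\dots,C_i\cup\{e\},\dots,C_k)$: since $C_\ell\subseteq B_\ell\subseteq B'_\ell$, every union $B'_\ell\cup C_\ell$ collapses to $B'_\ell$, and the pairwise disjointness of the $B'_\ell$ means the set differences in the definition of $\sqcup$ remove nothing. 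The resulting instance of $k$-submodularity reads $f(\mathbf{C})+f(\mathbf{B}')\le f(\mathbf{B}')+f(\mathbf{C})$, a tautology that yields nothing. The fix is to attach $e$ to the \emph{smaller} tuple: apply the $k$-submodular inequality to $\mathbf{B}$ and $\mathbf{C}'':=(C_1,\dots,C_i\cup\{e\},\dots,C_k)$. Then $\mathbf{B}\sqcap\mathbf{C}''=\mathbf{C}$ (because $e\notin B_i$) and $\mathbf{B}\sqcup\mathbf{C}''=(B_1,\dots,B_i\cup\{e\},\dots,B_k)$ (because $e$ is fresh for $\mathbf{B}$ and the $B_\ell$ are pairwise disjoint), so the inequality rearranges to exactly $\Delta_{e,i} f(\mathbf{B})\le\Delta_{e,i} f(\mathbf{C})$. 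With that correction your argument is complete and coincides with the paper's proof.
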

\begin{proof}
    Fix $\bA \in (k + 1)^E$
    and label the elements of $E$ in such a way that $A_i = \left\{ e^{(i)}_1, \dots, e^{(i)}_{a_i} \right\}$ for each $1 \le i \le k$, where $a_i = |A_i|$.
    Now,
    \begin{align*}
        f(\bA) - f(\varnothing, \dots, \varnothing)
        = \sum_{i = 1}^k \sum_{j = 1}^{a_i} \Delta_{e^{(i)}_j, i}\left( f \mid A_1, \dots, A_{i - 1}, \{ e^{(i)}_1, \dots, e^{(i)}_{j - 1} \},
        \varnothing, \dots, \varnothing \right).
    \end{align*}
    The $\Delta_{e^{(i)}_j, i}\left( f \mid A_1, \dots, A_{i - 1}, \{ e^{(i)}_1, \dots, e^{(i)}_{j - 1} \},
    \varnothing, \dots, \varnothing \right)$ terms can be estimated in both directions. By the diminishing returns property, we know that
    \begin{align*}
        \Delta_{e^{(i)}_j, i}\left( f \mid A_1, \dots, A_{i - 1}, \{ e^{(i)}_1, \dots, e^{(i)}_{j - 1} \},
        \varnothing, \dots, \varnothing \right)
        \le \Delta_{e^{(i)}_j, i}\left( f \mid \varnothing, \dots, \varnothing \right),
    \end{align*}
    while we conclude
    \begin{align*}
        \Delta_{e^{(i)}_j, i}\left( f \mid A_1, \dots, A_{i - 1}, \{ e^{(i)}_1, \dots, e^{(i)}_{j - 1} \},
        \varnothing, \dots, \varnothing \right)
        \ge (1 - c_f) \Delta_{e^{(i)}_j, i}\left( f \mid  \varnothing, \dots, \varnothing \right)
    \end{align*}
    from the curvature of $f$. Since
    \begin{align*}
        S_f(\bA) := \sum_{i = 1}^k \sum_{j = 1}^{a_i} \Delta_{e^{(i)}_j, i}\left( f \mid  \varnothing, \dots, \varnothing \right),
    \end{align*}
    we see that $(1 - c_f) S_f(\bA) \le f(\bA) - f(\varnothing, \dots, \varnothing) \le S_f(\bA)$ after combining both inequalities.
    Analogously, we derive the $(1 - c_g) S_g(\bA) \le f(\bA) - f(\varnothing, \dots, \varnothing) \le S_g(\bA)$ for $g$. Next,
    \begin{align*}
        \frac{f(\bA)}{g(\bA)}
        = \frac
            {f(\bA) - f(\varnothing, \dots, \varnothing) + f(\varnothing, \dots, \varnothing)}
            {g(\bA) - g(\varnothing, \dots, \varnothing) + g(\varnothing, \dots, \varnothing)}
        \le \frac{S_f(\bA) + f(\varnothing, \dots, \varnothing)}{(1 - c_g) S_g(\bA) + g(\varnothing, \dots, \varnothing)}
    \end{align*}
    and
    \begin{align*}
            \frac{f(\bA)}{g(\bA)}
            = \frac
                {f(\bA) - f(\varnothing, \dots, \varnothing) + f(\varnothing, \dots, \varnothing)}
                {g(\bA) - g(\varnothing, \dots, \varnothing) + g(\varnothing, \dots, \varnothing)}
            \ge \frac{(1 - c_f) S_f(\bA) + f(\varnothing, \dots, \varnothing)}{S_g(\bA) + g(\varnothing, \dots, \varnothing)},
    \end{align*}
    squeezing $\frac{f(\bA)}{g(\bA)}$ between
    $\frac{(1 - c_f) S_f(\bA) + f(\varnothing, \dots, \varnothing)}{S_g(\bA) + g(\varnothing, \dots, \varnothing)}$ and
    $\frac{S_f(\bA) + f(\varnothing, \dots, \varnothing)}{(1 - c_g) S_g(\bA) + g(\varnothing, \dots, \varnothing)}$.
\end{proof}

\begin{lemma*}[Lemma~\ref{lem:approx} restated]
    Let $\bA \in (k + 1)^E$ be an $(1 - \eps)$-approximate maximiser
    of $\frac{S_f(\bA) + f(\varnothing, \dots, \varnothing)}{S_g(\bA) + g(\varnothing, \dots, \varnothing)}$.
    Then
    \begin{align*}
        \frac{f(\bA)}{g(\bA)} \ge (1 - \eps)(1 - c_f)(1 - c_g) \frac{f(\bA^*)}{g(\bA^*)}
    \end{align*}
    for any $\bA^* \in (k + 1)^E$.
\end{lemma*}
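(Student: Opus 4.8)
The plan is to chain the two-sided estimate of Lemma~\ref{lem:mod:ratio:approx} with the defining property of an approximate maximiser, using non-negativity to absorb the additive constants $f(\varnothing,\dots,\varnothing)$ and $g(\varnothing,\dots,\varnothing)$ into the curvature factors. Throughout, write $R(\bA) := \frac{S_f(\bA) + f(\varnothing,\dots,\varnothing)}{S_g(\bA) + g(\varnothing,\dots,\varnothing)}$ for the ``modular ratio'' that the \MODRATIO reduction lets us approximate, and fix an arbitrary $\bA^* \in (k+1)^E$.

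First I would lower-bound $\frac{f(\bA)}{g(\bA)}$ in terms of $R(\bA)$. The left inequality of Lemma~\ref{lem:mod:ratio:approx} gives $\frac{f(\bA)}{g(\bA)} \ge \frac{(1 - c_f) S_f(\bA) + f(\varnothing,\dots,\varnothing)}{S_g(\bA) + g(\varnothing,\dots,\varnothing)}$. Since $f$ is non-negative and $c_f \in [0,1]$, we have $f(\varnothing,\dots,\varnothing) \ge (1 - c_f) f(\varnothing,\dots,\varnothing)$, so the numerator is at least $(1 - c_f)\bigl(S_f(\bA) + f(\varnothing,\dots,\varnothing)\bigr)$, whence
\[
  \frac{f(\bA)}{g(\bA)} \ \ge\ (1 - c_f)\, R(\bA).
\]
Symmetrically, applying the right inequality of Lemma~\ref{lem:mod:ratio:approx} at $\bA^*$ and using $g(\varnothing,\dots,\varnothing) \ge (1 - c_g) g(\varnothing,\dots,\varnothing)$ yields $\frac{f(\bA^*)}{g(\bA^*)} \le \frac{S_f(\bA^*) + f(\varnothing,\dots,\varnothing)}{(1 - c_g) S_g(\bA^*) + g(\varnothing,\dots,\varnothing)} \le \frac{1}{1 - c_g} R(\bA^*)$, i.e.
\[
  R(\bA^*) \ \ge\ (1 - c_g)\, \frac{f(\bA^*)}{g(\bA^*)}.
\]

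Finally, since $\bA$ is a $(1 - \eps)$-approximate maximiser of $R$, we have $R(\bA) \ge (1 - \eps)\, R(\bA')$ for every $\bA'$, in particular $R(\bA) \ge (1 - \eps)\, R(\bA^*)$. Combining the three displayed inequalities,
\[
  \frac{f(\bA)}{g(\bA)} \ \ge\ (1 - c_f)\, R(\bA) \ \ge\ (1 - \eps)(1 - c_f)\, R(\bA^*) \ \ge\ (1 - \eps)(1 - c_f)(1 - c_g)\, \frac{f(\bA^*)}{g(\bA^*)},
\]
which is the claim.

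There is no genuine obstacle here: the real work is already contained in Lemma~\ref{lem:mod:ratio:approx}, and this step is a short chaining argument. The only points that deserve a word of care are the two ``slack'' steps $f(\varnothing,\dots,\varnothing) \ge (1 - c_f) f(\varnothing,\dots,\varnothing)$ and $g(\varnothing,\dots,\varnothing) \ge (1 - c_g) g(\varnothing,\dots,\varnothing)$, which rely on non-negativity of $f,g$ together with $c_f, c_g \in [0,1]$, and the degenerate situation where a denominator vanishes: if $g(\bA^*) = 0$ then by monotonicity and non-negativity $g(\varnothing,\dots,\varnothing) = 0$ and $S_g(\bA^*) = 0$, so such $\bA^*$ contributes nothing and may be excluded exactly as in the definition of the peak contributions $p_i$ (and likewise $g(\bA) > 0$ may be assumed for the maximiser $\bA$).
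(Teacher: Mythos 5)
Your proof is correct and takes essentially the same route as the paper's: both chain the two-sided bounds of Lemma~\ref{lem:mod:ratio:approx} with the $(1-\eps)$-approximate-maximiser property, absorbing the additive terms $f(\varnothing,\dots,\varnothing)$ and $g(\varnothing,\dots,\varnothing)$ into the factors $(1-c_f)$ and $(1-c_g)$ via non-negativity. Your extra remark about the degenerate case $g(\bA^*)=0$ is a reasonable (and harmless) addition that the paper leaves implicit.
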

\begin{proof}
    Note that $\frac{f(\bA)}{g(\bA)}$ and $\frac{f(\bA^*)}{g(\bA^*)}$ are included in the ranges stated by
    Lemma \ref{lem:mod:ratio:approx}, i.\,e.,
    \begin{align*}
        \frac{(1 - c_f) S_f(\bA^*) + f(\varnothing, \dots, \varnothing)}{S_g(\bA^*) + g(\varnothing, \dots, \varnothing)}
        \le \frac{f(\bA^*)}{g(\bA^*)} \le
        \frac{S_f(\bA^*) + f(\varnothing, \dots, \varnothing)}{(1 - c_g) S_g(\bA^*) + g(\varnothing, \dots, \varnothing)}
    \end{align*}
    and
    \begin{align*}
        \frac{(1 - c_f) S_f(\bA) + f(\varnothing, \dots, \varnothing)}{S_g(\bA) + g(\varnothing, \dots, \varnothing)}
        \le \frac{f(\bA)}{g(\bA)} \le
        \frac{S_f(\bA) + f(\varnothing, \dots, \varnothing)}{(1 - c_g) S_g(\bA) + g(\varnothing, \dots, \varnothing)}.
    \end{align*}
    Combining this with the fact that $\bA$ is an $(1 - \eps)$-approximate maximiser and the non-negativity of $f$ and $g$, we conclude that
    \begin{align*}
        \frac{f(\bA)}{g(\bA)}
        &\ge \frac{(1 - c_f) S_f(\bA) + f(\varnothing, \dots, \varnothing)}{S_g(\bA) + g(\varnothing, \dots, \varnothing)} \\
        &\ge (1 - c_f) \frac{S_f(\bA) + f(\varnothing, \dots, \varnothing)}{S_g(\bA) + g(\varnothing, \dots, \varnothing)} \\
        &\ge (1 - c_f)(1 - \eps) \frac{S_f(\bA^*) + f(\varnothing, \dots, \varnothing)}{S_g(\bA^*) + g(\varnothing, \dots, \varnothing)} \\
        &\ge (1 - \eps)(1 - c_f)(1 - c_g) \frac{S_f(\bA^*) + f(\varnothing, \dots, \varnothing)}{(1 - c_g) S_g(\bA^*) + g(\varnothing, \dots, \varnothing)} \\
        &\ge (1 - \eps)(1 - c_f)(1 - c_g) \frac{f(\bA^*)}{g(\bA^*)}.
    \end{align*}
\end{proof}

{\small
\bibliographystyle{plainurl}
\bibliography{kz}
}

\appendix

\section{Proofs of Theorem~\ref{thm:core} and Corollary~\ref{cor:core}}
\label{app:core-proofs}

\begin{theorem*}[Theorem~\ref{thm:core} restated]
    \ralg{core} outputs a vector $w \in \RR^N$ such that
    \begin{enumerate}[label = (\roman*)]
        \item \label{thm:core:sparsifier} 
          $\Pr{\text{\normalfont $w$ is an $\eps$-sparsifier}} \ge 1 - \delta$;
        \item \label{thm:core:size} 
          $\Ex{\size{w}} = \cO\left(\frac{\log{|\cD|} + \log{\frac{1}{\delta}}}{\eps^2} \sum_{i = 1}^N p_i \right)$,
        where $p_i = \max_{\substack{A \in \cD\\F(A) \neq 0}} \frac{f_i(A)}{F(A)}$;
        \item \label{thm:core:size:variance} 
          $\Pr{\size{w} \le \frac{3}{2} \Ex{\size{w}}} \ge 1 - 4\eps^2$.
    \end{enumerate}
\end{theorem*}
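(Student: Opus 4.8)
The plan is to establish the three conclusions separately, following the Benczúr--Karger template. Conclusion (ii) is immediate: $\size{w}$ counts the indices with $w_i\neq 0$, and $w_i\neq 0$ exactly when $f_i$ is sampled, which happens with probability $\kappa_i=\min\{1,\kappa p_i\}\le\kappa p_i$; hence by linearity of expectation $\Ex{\size{w}}=\sum_{i=1}^N\kappa_i\le\kappa\sum_{i=1}^N p_i=\frac{3\log(2|\cD|/\delta)}{\eps^2}\sum_{i=1}^N p_i$, and $\log(2|\cD|/\delta)=\cO(\log|\cD|+\log\frac1\delta)$.

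For conclusion (i), fix $A\in\cD$ with $F(A)\neq 0$; I may assume $F(A)>0$ and present the argument when $f_i(A)\ge 0$ for all $i$ (true in all intended applications). First I would check that the estimator is unbiased: $\Ex{w_i}=1$ whenever $\kappa_i>0$, while $\kappa_i=0$ forces $p_i=0$ and hence $f_i(A)=0$, so $\Ex{F'(A)}=\sum_i\Ex{w_i}f_i(A)=F(A)$. Then I would rescale: with $Y_i:=\kappa\,w_if_i(A)/F(A)$ one has $Y_i=0$ when $f_i$ is not sampled and otherwise $Y_i=\kappa f_i(A)/(\kappa_iF(A))\le f_i(A)/(p_iF(A))\le 1$ by the definition of $p_i$. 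Thus $\sum_iY_i$ is a sum of independent $[0,1]$-valued variables with mean $\kappa$, and a multiplicative Chernoff bound gives $\Pr{|\sum_iY_i-\kappa|>\eps\kappa}\le 2\exp(-\eps^2\kappa/3)=2\exp(-\log(2|\cD|/\delta))=\delta/|\cD|$, i.e.\ $\Pr{|F'(A)-F(A)|>\eps F(A)}\le\delta/|\cD|$. A union bound over the at most $|\cD|$ relevant sets yields, with probability at least $1-\delta$, $|F'(A)-F(A)|\le\eps F(A)$ simultaneously (sets with $F(A)=0$ are trivial, since then $F'(A)=0$), which a routine translation---after possibly shrinking $\eps$ by a constant factor, affecting $\kappa$ only by a constant---turns into $(1-\eps)F'(A)\le F(A)\le(1+\eps)F'(A)$. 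For $f_i$ of arbitrary sign I would swap Chernoff for a Bernstein bound, using $\sum_i\Var{w_if_i(A)}\le\frac{F(A)}{\kappa}\sum_if_i(A)=F(A)^2/\kappa$ and the summand bound $F(A)/\kappa$; with the given $\kappa$ this still gives failure probability below $\delta/|\cD|$.

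For conclusion (iii), write $\size{w}=\sum_{i=1}^NB_i$ with independent $B_i$ that equal $1$ with probability $\kappa_i$. Then $\Var{\size{w}}=\sum_i\kappa_i(1-\kappa_i)\le\sum_i\kappa_i=\Ex{\size{w}}$, so Chebyshev's inequality gives $\Pr{\size{w}>\frac32\Ex{\size{w}}}\le\frac{4\Var{\size{w}}}{\Ex{\size{w}}^2}\le\frac{4}{\Ex{\size{w}}}$, which is at most $4\eps^2$ provided $\Ex{\size{w}}\ge 1/\eps^2$.

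That proviso is the step I expect to be the main obstacle. Its natural source is the bound $\sum_{i=1}^Np_i\ge 1$, obtained by evaluating the identity $\sum_if_i(A)/F(A)=1$ at any $A$ with $F(A)\neq 0$ and using $p_i\ge f_i(A)/F(A)$ (this uses only $F\not\equiv 0$, which is necessary for the problem to make sense). Since $\kappa=3\log(2|\cD|/\delta)/\eps^2\ge 1/\eps^2$, if no index is capped---$\kappa p_i\le 1$ for all $i$---then $\Ex{\size{w}}=\kappa\sum_ip_i\ge\kappa\ge 1/\eps^2$ and we are done. The delicate regime is when some indices are capped: for those $\min\{1,\kappa p_i\}$ can be much smaller than $\kappa p_i$, so $\Ex{\size{w}}$ may fall well below $\kappa\sum_ip_i$ and below $1/\eps^2$. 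I would handle this by splitting $\size{w}$ into its deterministic part (the indices with $\kappa_i=1$, contributing a fixed amount) and its random part, and arguing that the noiseless deterministic part cannot by itself carry $\size{w}$ past $\frac32\Ex{\size{w}}$ while the random part concentrates; making this split deliver exactly the constant $4\eps^2$---rather than something weaker, or a statement that requires $\eps$ to be bounded away from $0$---is, I believe, the real difficulty in the theorem.
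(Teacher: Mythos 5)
Your overall template (unbiasedness $+$ Chernoff $+$ union bound for (i), linearity of expectation for (ii), Chebyshev with $\Var{\size{w}}\le\Ex{\size{w}}$ for (iii)) is exactly the paper's. Part (ii) is correct as written. But part (i) contains a genuine gap at precisely the point the paper is at pains to flag: your claim $Y_i=\kappa f_i(A)/(\kappa_i F(A))\le f_i(A)/(p_i F(A))\le 1$ requires $\kappa/\kappa_i\le 1/p_i$, i.e.\ $\kappa p_i\le\kappa_i=\min\{1,\kappa p_i\}$, which fails whenever $\kappa p_i>1$ (the capped indices). For such $i$ one has $w_i=1$ and $Y_i=\kappa f_i(A)/F(A)$, which can far exceed $1$, so your variables are not $[0,1]$-valued and the multiplicative Chernoff bound does not apply as stated. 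This is exactly the error the paper identifies in Rafiey--Yoshida's argument. The fix (which the paper carries out) is to split $[N]$ into $I=\{i:\kappa p_i>1\}$ and $\bar I$: the capped indices are included deterministically, so they cancel from $F'(A)-\Ex{F'(A)}$, and Chernoff is applied only to $\bar Z=\sum_{i\in\bar I}w_if_i(A)$, whose summands do satisfy the range bound $[0,F(A)/\kappa]$, with the deviation still measured against $\mu=\Ex{F'(A)}\ge\Ex{\bar Z}$. You use this very split in your discussion of (iii) but not in (i), where it is actually needed.

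On part (iii): your identification of $\Ex{\size{w}}\ge 1/\eps^2$ as the crux, and your derivation of it in the uncapped regime from $\sum_i p_i\ge 1$ and $\kappa\ge 1/\eps^2$, matches the paper. You are right to be suspicious of the capped regime. The paper disposes of it with the one-liner ``if some $\kappa_i=1$ then $\Ex{\size{w}}\ge 1\ge 1/\eps^2$,'' but $1\ge 1/\eps^2$ is false for $\eps\in(0,1)$, so the paper's own treatment of that case does not go through as written; your instinct that a deterministic/random split is needed there, and that getting the clean constant $4\eps^2$ is delicate, is sound rather than a sign you missed something easy. As submitted, though, your proof of (iii) is incomplete in that regime (you describe a strategy without executing it), and your proof of (i) is incorrect without the $I/\bar I$ split.
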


\begin{proof}
    We show part~\ref{thm:core:sparsifier} first by showing that each $A \in \cD$ satisfies
    \begin{align}
        \label{eq:eps:bound:single:set}
        \Pr{(1 - \eps) F(A) \le F'(A) \le (1 + \eps) F(A)} \ge 1 - \frac{\delta}{|\cD|}.
    \end{align}
    An application of the union bound over all $|\cD|$ elements\footnote{Over all elements $A \in \cD$ such that $F(A) \neq 0$
    to be precise. If $F(A) = 0$, we know that $f_i(A) = 0$ for all $1 \le i \le N$, so \requation{same:expectation}
    and \requation{chernoff:goal} trivially hold. We do not consider such sets $A$ in our analysis and encourage the reader to think of $A \in \cD$
    with $F(A) > 0$ for the rest of the analysis.} of the domain then proves the claim.
    Note that
    \begin{align}
        \label{eq:same:expectation}
        \Ex{F'(A)} = \Ex{\sum_{i = 1}^{N} w_i f_i(A)}
        = \sum_{i = 1}^{N} \Ex{w_i} f_i(A)
        = \sum_{i = 1}^{N} f_i(A)
        = F(A)
    \end{align}
    since $\Ex{w_i} = \kappa_i \cdot 1 / \kappa_i + (1 - \kappa_i) \cdot 0 = 1$.
    Therefore, the inequality indicating ``success'' above is equivalent to
    $(1 - \eps) \Ex{F'(A)} \le F'(A) \le (1 + \eps) \Ex{F'(A)}$, i.\,e.,
    that $F'(A)$ lies within a factor of $1 \pm \eps$ of its expectation.
    The next step is to establish
    \begin{align}
        \label{eq:chernoff:goal}
        \Pr{\abs{F'(A) - \Ex{F'(A)}} \ge \eps \cdot \Ex{F'(A)}} \le 2 e^{-\eps^2 \kappa / 3}.
    \end{align}
    This is essentially done by recognising that the terms $w_i f_i(A)$ can be regarded as independent random variables with a bounded range $[0, a]$
    for some $a \le F(A) / \kappa$.
    However, we cannot simply mimic the proof by Rafiey and Yoshida~\cite{Rafiey22:aaai} here as it includes an invalid step.
    They claim that $\max_{1 \le i \le N} w_i f_i(A) = \max_{1 \le i \le N} \frac{f_i(A)}{\kappa p_i}$,
    which is not necessarily true when $\kappa_i = 1$ (remember that $\kappa_i = \min\{ 1, \kappa p_i \}$).
    In this case, $w_i = 1/\kappa_i = 1$, so $w_i f_i(A) = f_i(A)$, which exceeds $\frac{f_i(A)}{\kappa p_i}$ for $\kappa p_i > 1$.
    This in fact shows that their inequality goes in the wrong direction.

    Fortunately, this issue can be circumvented by considering the indices $i$ with $\kappa p_i > 1$ separately.
    Let $I := \{ i \in \{ 1, \dots, N \} \mid \kappa p_i > 1 \}$ and $\bar{I} := \{ 1, \dots, N \} \setminus I$.
    Moreover, define $Z_i := w_i f_i(A)$ for all $1 \le i \le N$, $Z := \sum_{i \in I} Z_i$ and $\bar{Z} := \sum_{i \in \bar{I}} Z_i$. We have
    \begin{align*}
        F'(A) = \sum_{i = 1}^N Z_i = \sum_{i \in I} Z_i + \sum_{i \in \bar{I}} Z_i = Z + \bar{Z}
    \end{align*}
    where the first part $\sum_{i \in I} Z_i = \sum_{i \in I} f_i(A)$
    is deterministic as we have $w_i = 1$ with probability $\kappa_i = \min\{ 1, \kappa p_i \} = 1$ for each $i \in I$.
    Thus,
    \begin{align*}
        F'(A) - \Ex{F'(A)}
        &= \sum_{i = 1}^N w_i f_i(A) - \sum_{i = 1}^N f_i(A) \\
        &= \sum_{i \in I} w_i f_i(A) + \sum_{i \in \bar{I}} w_i f_i(A) - \sum_{i \in I} f_i(A) - \sum_{i \in \bar{I}} f_i(A) \\
        &= Z + \sum_{i \in \bar{I}} w_i f_i(A) - Z - \sum_{i \in \bar{I}} f_i(A) \\
        &= \sum_{i \in \bar{I}} w_i f_i(A) - \sum_{i \in \bar{I}} f_i(A) \\
        &= \sum_{i \in \bar{I}} Z_i - \sum_{i \in \bar{I}} \Ex{Z_i} \\
        &= \bar{Z} - \Ex{\bar{Z}}
    \end{align*}
    by \requation{same:expectation}, the partitioning $\{ 1, \dots, N \} = I \cup \bar{I}$,
    definition of the $Z_i$'s, $Z$ and $\bar{Z}$, and linearity of expectation.
    We conclude that \requation{chernoff:goal} is equivalent to
    \begin{align}
        \label{eq:chernoff:goal:2}
        \Pr{\abs{\bar{Z} - \Ex{\bar{Z}}} \ge \eps \cdot \Ex{F'(A)}} \le 2 e^{-\eps^2 \kappa / 3}.
    \end{align}
    Since $f_i(A) \ge 0$ for all $1 \le i \le N$, observe that
    \begin{align}
        \Ex{\bar{Z}} = \sum_{i \in \bar{I}} f_i(A)
        \le \sum_{i \in \bar{I}} f_i(A) + \sum_{i \in I} f_i(A)
        = \sum_{i = 1}^N f_i(A) = F(A) = \Ex{F'(A)},
    \end{align}
    where the last step is by \requation{same:expectation}. We now bound $\Pr{\abs{\bar{Z} - \Ex{\bar{Z}}} \ge \eps \cdot \Ex{F'(A)}}$
    by an application of the Chernoff-Hoeffding bound~\cite{randomized-algorithms} in the version of Theorem 2.2 in~\cite{Rafiey22:aaai}.
    A check of the preconditions reveals that the $Z_i$'s are independent random variables (since the $w_i$'s are sampled independently).
    Letting $a := \max_{i \in \bar{I}} Z_i$, the $Z_i$'s with $i \in \bar{I}$ all have range $[0, a]$.
    Choosing $\mu := \Ex{F'(A)} \ge \Ex{\bar{Z}}$, we conclude
    \begin{align}
        \label{eq:chernoff:goal:2:almost}
        \begin{split}
        &\Pr{\abs{\bar{Z} - \Ex{\bar{Z}}} \ge \eps \mu} \le 2 e^{-\frac{\eps^2 \mu}{3a}} \\
        \iff &\Pr{\abs{\bar{Z} - \Ex{\bar{Z}}} \ge \eps \cdot \Ex{F'(A)}} \le 2 e^{-\frac{\eps^2 F(A)}{3a}}.
        \end{split}
    \end{align}
    This is very close to \requation{chernoff:goal:2}. Indeed, the last we need is an upper bound on $a$.
    To accomplish this, note that
    \begin{align*}
        Z_i = w_i f_i(A) = \frac{f_i(A)}{\kappa p_i} \le \frac{f_i(A)}{\kappa \max_{S \subseteq E} \frac{f_i(S)}{F(S)}}
        \le \frac{f_i(A)}{\kappa \frac{f_i(A)}{F(A)}} = \frac{F(A)}{\kappa}
    \end{align*}
    for all $i \in \bar{I}$. Thus, maximising over all $i \in \bar{I}$ reveals $a = \max_{i \in \bar{I}} Z_i \le F(A) / \kappa$.
    Substituting this into \requation{chernoff:goal:2:almost} yields \requation{chernoff:goal:2}.
    The RHS $2 e^{-\eps^2 \kappa / 3}$ becomes $\le \frac{\delta}{|\cD|}$ when choosing $\kappa = 3 \log{\left( \frac{2|\cD|}{\delta} \right)} / \eps^2$
    as in~\ralg{core} above, proving \requation{eps:bound:single:set}.
    
    Part~\ref{thm:core:size} is a simple computation. The events $\{ w_i \neq 0 \}$ occur with probability $\kappa_i$, so
    \begin{align*}
        \Ex{\size{w}} = \Ex{\sum_{i = 1}^{N} \left[ w_i \neq 0 \right]}
        = \sum_{i = 1}^{N} \Pr{w_i \neq 0}
        = \sum_{i = 1}^{N} \kappa_i
        \le \sum_{i = 1}^{N} \kappa p_i
        = \kappa \sum_{i = 1}^{N} p_i,
    \end{align*}
    where $\kappa = 3 \log{\left( \frac{2 |\cD|}{\delta} \right)} / \eps^2
    = \cO\left( \frac{\log{|\cD|} + \log{\frac{1}{\delta}}}{\eps^2} \right)$
    by the choice in~\ralg{core}.
    Hence the overall bound $\Ex{\size{w}} = \cO\left( \frac{\log{|\cD|} + \log{\frac{1}{\delta}}}{\eps^2} \sum_{i = 1}^{N} p_i \right)$.

    Part~\ref{thm:core:size:variance} is shown as follows.
    Let $[w_i \neq 0]$ be an indicator random variable for the event $\{ w_i \neq 0 \}$, i.\,e., $[w_i \neq 0] = 1$ if $w_i \neq 0$
    and $[w_i \neq 0] = 0$ if $w_i = 0$.
    The variance of a sum of independent random variables is the sum of their variances.
    Thus,
    \begin{align*}
        \Var{\size{w}} = \Var{\sum_{i = 1}^N [w_i \neq 0]} = \sum_{i = 1}^N \Var{[w_i \neq 0]}
    \end{align*}
    where $\Var{[w_i \neq 0]} = \Ex{[w_i \neq 0]^2} - \Ex{[w_i \neq 0]}^2 = \kappa_i - \kappa_i^2$ for each $1 \le i \le N$.
    It follows that $\Var{\size{w}} = \sum_{i = 1}^N \left( \kappa_i - \kappa_i^2 \right) \le \sum_{i = 1}^N \kappa_i = \Ex{\size{w}}$.

    Thus, by Chebyshev's inequality,
    \begin{align*}
        \Pr{\abs{\size{w} - \Ex{\size{w}}} \ge \frac{1}{2} \Ex{\size{w}}} \le \frac{\Var{\size{w}}}{\left( \frac{1}{2} \Ex{\size{w}} \right)^2}
        \le \frac{4}{\Ex{\size{w}}},
    \end{align*}
    which is $\le 4\eps^2$ because $\Ex{\size{w}} \ge 1/\eps^2$:
    If $\kappa_i = 1$ for some $i$ (i.\,e. $I \neq \varnothing$), we know that $\Ex{\size{w}} = \sum_{i = 1}^N \kappa_i \ge 1 \ge 1/\eps^2$ (recall $\eps \in (0, 1)$).
    If $\kappa_i < 1$ for all $i$ (i.\,e. $I = \varnothing$), we have
    $\Ex{\size{w}} = \sum_{i = 1}^N \kappa_i = \kappa \sum_{i = 1}^N p_i \overset{(\star)}{\ge} \kappa \ge 1 / \eps^2$,
    where $(\star)$ is justified by the following bound.

    \textbf{Claim:} We always have $\sum_{i = 1}^N p_i \ge 1$.

    Fix an arbitrary element $A^* \in \cD$ with $F(A^*) \neq 0$. Then
    \begin{align*}
        \sum_{i = 1}^N p_i = \sum_{i = 1}^N \max_{\substack{A \in \cD\\F(A) \neq 0}} \frac{f_i(A)}{F(A)}
        \ge \sum_{i = 1}^N \frac{f_i(A^*)}{F(A^*)}
        = \frac{1}{F(A^*)} \sum_{i = 1}^N f_i(A^*)
        = \frac{1}{F(A^*)} \cdot F(A^*) = 1.
    \end{align*}
    Noting $\Pr{\size{w} \ge \frac{3}{2} \Ex{\size{w}}} \le \Pr{|\size{w} - \Ex{\size{w}}| \ge \frac{1}{2} \Ex{\size{w}}} \le 4\eps^2$
    (as the former implies the latter) completes the proof.
\end{proof}

\begin{corollary*}[Corollary~\ref{cor:core} restated]
    In the setting of~\ralg{core}, let $\widehat{p}_1, \dots, \widehat{p}_N \in \Rnn$
    satisfy $\widehat{p}_i \ge p_i$ for all $1 \le i \le N$.
    If~\ralg{core} is executed with the $\widehat{p}_i$'s instead of $p_i = \max_{\substack{A \in \cD\\F(A)\neq 0}} \frac{f_i(A)}{F(A)}$
    in line~\ref{alg:core:pi}, it returns a vector $w \in \RR^N$ such that
    \begin{enumerate}[label = (\roman*)]
        \item $\Pr{\text{\normalfont $w$ is an $\eps$-sparsifier}} \ge 1 - \delta$,
        \item $\Ex{\size{w}} = \cO\left(\frac{\log{|\cD|} + \log{\frac{1}{\delta}}}{\eps^2} \sum_{i = 1}^N \widehat{p}_i \right)$,
        \item $\Pr{\size{w} \le \frac{3}{2} \Ex{\size{w}}} \ge 1 - 4\eps^2$.
    \end{enumerate}
\end{corollary*}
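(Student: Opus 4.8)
The plan is to observe that replacing $p_i$ by any overestimate $\widehat p_i \ge p_i$ changes nothing essential in the proof of \rtheorem{core}: that argument uses the $p_i$'s only through (a) the unbiasedness $\Ex{w_i}=1$, (b) the range bound on the Chernoff summands, and (c) the inequality $\sum_i p_i \ge 1$, and each of these is preserved (one-sidedly) under $\widehat p_i \ge p_i$. Concretely, writing $\widehat\kappa_i := \min\{1,\kappa\widehat p_i\}$ for the probabilities actually used by the modified algorithm, we still have $\Ex{w_i} = \widehat\kappa_i \cdot 1/\widehat\kappa_i + (1-\widehat\kappa_i)\cdot 0 = 1$, hence $\Ex{F'(A)} = F(A)$ for all $A\in\cD$, exactly as in \requation{same:expectation}. (When $\widehat p_i = 0$ we have $\widehat\kappa_i = 0$ and $f_i$ is simply never sampled, so such indices are harmless and can be ignored.)

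For part~(i) I would rerun the Chernoff--Hoeffding computation verbatim, now splitting $[N]$ into $\widehat I := \{i : \kappa\widehat p_i > 1\}$ and $\bar{\widehat I} := [N]\setminus\widehat I$; for $i\in\widehat I$ the contribution $w_i f_i(A) = f_i(A)$ is deterministic and cancels against its expectation just as before, so only the summands over $\bar{\widehat I}$ remain. The sole place the $p_i$'s entered the original proof was the range bound on these summands, and it still holds: for $i\in\bar{\widehat I}$ the value $w_i f_i(A)$, when nonzero, equals $\frac{f_i(A)}{\kappa\widehat p_i} \le \frac{f_i(A)}{\kappa p_i} \le \frac{F(A)}{\kappa}$, using $\widehat p_i \ge p_i \ge \frac{f_i(A)}{F(A)}$ for the set $A$ under consideration (and it is $0 \le F(A)/\kappa$ trivially when $f_i(A)=0$). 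Thus each $A$ with $F(A)\neq 0$ fails with probability at most $2e^{-\eps^2\kappa/3} \le \delta/|\cD|$ for $\kappa = 3\log(2|\cD|/\delta)/\eps^2$, and a union bound over $\cD$ yields (i).

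Part~(ii) is the same one-line estimate $\Ex{\size{w}} = \sum_{i=1}^N \widehat\kappa_i \le \kappa\sum_{i=1}^N\widehat p_i$ together with $\kappa = \cO\!\left(\frac{\log|\cD| + \log\frac1\delta}{\eps^2}\right)$. For part~(iii), the variance bound is unchanged: $\Var{\size{w}} = \sum_{i=1}^N(\widehat\kappa_i - \widehat\kappa_i^2) \le \Ex{\size{w}}$, so Chebyshev's inequality gives $\Pr{\abs{\size{w} - \Ex{\size{w}}} \ge \frac12\Ex{\size{w}}} \le 4/\Ex{\size{w}}$, and it suffices to show $\Ex{\size{w}} \ge 1/\eps^2$. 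If $\widehat\kappa_i = 1$ for some $i$ this is immediate; otherwise $\Ex{\size{w}} = \kappa\sum_{i=1}^N\widehat p_i \ge \kappa\sum_{i=1}^N p_i \ge \kappa \ge 1/\eps^2$, where $\sum_{i=1}^N p_i \ge 1$ is the Claim already established inside the proof of \rtheorem{core}.

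I do not expect a real obstacle here; the only care needed is to confirm that $\widehat p_i \ge p_i$ is used in the right direction everywhere --- it can only shrink the range of the Chernoff summands (harmless for (i)) and enlarge $\sum_i\widehat p_i$ (which is precisely why the size bound in (ii) is stated with $\widehat p_i$, and which still supports the lower bound $\Ex{\size{w}}\ge 1/\eps^2$ needed in (iii)) --- together with the trivial handling of the degenerate case $\widehat p_i = 0$.
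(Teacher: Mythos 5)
Your proposal is correct and matches the paper's proof: both argue that the only place the $p_i$'s enter the proof of Theorem~\ref{thm:core} nontrivially is the range bound $a \le F(A)/\kappa$ on the Chernoff summands, which survives since $\widehat{p}_i \ge p_i$, while parts (ii) and (iii) go through verbatim using $\sum_i \widehat{p}_i \ge \sum_i p_i \ge 1$. Your extra remark on the degenerate case $\widehat{p}_i = 0$ is a harmless refinement.
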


\begin{proof}
    All steps in the proof of \rtheorem{core} can be mimicked.
    For part~\ref{thm:core:sparsifier}, the magic happens when estimating the quantity $a$, where
    \begin{align*}
        \frac{f_i(A)}{\kappa \cdot \max_{S \in \cD} \frac{f_i(S)}{F(S)}}
        \le \frac{f_i(A)}{\kappa \cdot \frac{f_i(A)}{F(A)}} = \frac{F(A)}{\kappa}
    \end{align*}
    is established for all $i \in \bar{I}$.
    Since $\widehat{p}_i \ge p_i$ for each $i$, we have
    \begin{align*}
        w_i f_i(A) = \frac{f_i(A)}{\kappa \cdot \widehat{p}_i}
        \le \frac{f_i(A)}{\kappa \cdot p_i},
    \end{align*}
    for each $i$ with $\kappa \widehat{p}_i \le 1$,
    hence the same bound $a \le F(A) / \kappa$ follows.
    The indices $i$ with $\kappa \widehat{p}_i < 1$ can be handled without any modifications.
    Part~\ref{thm:core:size} is exactly the same.
    Part~\ref{thm:core:size:variance} is also largely the same, except for the claim $\sum_{i = 1}^N p_i \ge 1$ that needs to be checked for the $\widehat{p}_i$'s.
    However, this is trivial, as $\sum_{i = 1}^N \widehat{p}_i \ge \sum_{i = 1}^N p_i \ge 1$.
\end{proof}

\section{(Monotone) Submodular Functions}
\label{app:submodular}

After some preliminary results on submodular functions in
Section~\ref{subsec:prelims}, we will show in Section~\ref{subsec:counterexample} that Inequality~(\ref{ineq:peaks}), restated here as
\begin{equation*}
  \sum_{i=1}^Np_i\ \leq\ Bn,
\end{equation*}
does \emph{not} hold for general submodular $f_i$'s, thus
disproving~\cite{Rafiey22:aaai}. We will then show in
Section~\ref{subsec:submodular} that
(\ref{ineq:peaks}) does hold for monotone submodular $f_i$'s. Together
    with the $\cO(\sqrt{n})$-approximation of the peak contributions using the
    ellipsoid method~\cite{Bai16:icml}, this establishes that~\rtheorem{core}\,\ref{thm:core:size} simplifies to 
    \begin{equation}
      \Ex{\size{w}} = \cO\left( B \frac{n^2 + n \log{\frac{1}{\delta}}}{\eps^2} \right)
    \end{equation}
    if all $f_i$'s are monotone submodular,
    which is 
    \begin{equation}
      \cO\left( \frac{Bn^2}{\eps^2} \right)
     \end{equation}
     for a fixed $\delta$.

\subsection{Preliminaries} \label{subsec:prelims}

For a submodular function $f: 2^E \to \RR$, the \emph{submodular polyhedron} of
$f$~\cite{submodular-functions-and-optimization-ii} is defined as
\begin{align}
    \cP(f) := \left\{ x \in \RR^E \mid x(S) \le f(S) \text{ for all } S \subseteq E \right\}
\end{align}
and the \emph{base polyhedron} of
$f$~\cite{submodular-functions-and-optimization-ii} is defined as
\begin{align}
    \cB(f) := \left\{ x \in \cP(f) \mid x(E) = f(E) \right\},
\end{align}
where $x$ is an $|E|$-dimensional vector and $x(S)$ denotes the quantity $\sum_{e \in S} x_e$.

We denote by $\EX(P)$ the set of extreme points of a polyhedron $P$.
In particular, $\EX(\cB(f))$ is the set of extreme points of the base polyhedron of a submodular function $f$.

We will need a result that characterises the extreme points of the base polyhedron of a submodular function.

\begin{theorem}[\cite{submodular-functions-and-optimization-ii}]
\label{thm:extreme:point}
Let $f: 2^E \to \RR$ be submodular. A point $y \in \cB(f)$
is an extreme point of $\cB(f)$ if and only if for a maximal chain (with
respect to inclusion)
\begin{align}
\cC: \varnothing = S_0 \subset S_1 \subset \dots \subset S_n = E
\end{align}
of $2^E$ we have $y_{e_i} = f(S_i) - f(S_{i - 1})$ for $i = 1, \dots, n$, where $\{ e_i \} = S_i \setminus S_{i - 1}$.
\end{theorem}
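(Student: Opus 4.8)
The plan is to prove both implications of this classical greedy characterisation. I use the standard normalisation $f(\varnothing)=0$, which is what makes the telescoping sums below equal $f(E)$ and $f(T)$ exactly. The ``if'' direction is a greedy computation; the ``only if'' direction rests on the fact that the sets at which $y$ is tight form a family closed under union and intersection, together with the translation of ``extreme point'' into a full-rank condition on the incidence vectors of those tight sets.

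\emph{The ``if'' direction.} Fix the chain $\cC$, relabel so that $\{e_i\}=S_i\setminus S_{i-1}$, and set $y_{e_i}:=f(S_i)-f(S_{i-1})$. Telescoping gives $y(S_i)=f(S_i)$ for all $i$, in particular $y(E)=f(E)$. For membership in $\cP(f)$ I would take an arbitrary $T\subseteq E$, list its elements in the chain order as $e_{j_1},\dots,e_{j_m}$ with $j_1<\dots<j_m$, note that $\varnothing=T\cap S_{j_1-1}\subset T\cap S_{j_1}\subset\dots\subset T\cap S_{j_m}=T$ adds the elements $e_{j_1},\dots,e_{j_m}$ one at a time, and apply the diminishing-returns inequality~(\ref{ineq:submodular2}) (with $T\cap S_{j_\ell-1}\subseteq S_{j_\ell-1}$) to obtain $y_{e_{j_\ell}}=f(S_{j_\ell})-f(S_{j_\ell-1})\le f(T\cap S_{j_\ell})-f(T\cap S_{j_\ell-1})$; summing over $\ell$ telescopes the right-hand side to $f(T)$, so $y(T)\le f(T)$. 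Hence $y\in\cB(f)$. Moreover the $n$ constraints $x(S_i)=f(S_i)$, $i=1,\dots,n$, are active at $y$ and their incidence vectors $\mathbbm{1}_{S_1},\dots,\mathbbm{1}_{S_n}$ form a triangular, hence invertible, system in the chain order; a point of a polyhedron at which a full-rank set of constraints is active is an extreme point.

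\emph{The ``only if'' direction.} Let $y\in\cB(f)$ be an extreme point and put $\mathcal F:=\{S\subseteq E: y(S)=f(S)\}$. An uncrossing computation shows $\mathcal F$ is closed under union and intersection: for $S,T\in\mathcal F$,
\[
f(S)+f(T)=y(S)+y(T)=y(S\cap T)+y(S\cup T)\le f(S\cap T)+f(S\cup T)\le f(S)+f(T),
\]
the first inequality by $y\in\cP(f)$ and the second by submodularity, so equality holds throughout and $S\cap T,S\cup T\in\mathcal F$. Since $y$ is an extreme point of $\cB(f)$, the constraints active at $y$ — precisely $\{x(S)=f(S):S\in\mathcal F\}$, which includes $x(E)=f(E)$ as $E\in\mathcal F$ — have rank $n$, i.e.\ $\{\mathbbm{1}_S:S\in\mathcal F\}$ spans $\RR^E$; this forces $\mathcal F$ to separate every pair $e\neq e'$ (otherwise all $\mathbbm{1}_S$ lie on $\{x_e=x_{e'}\}$). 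It then remains to extract a maximal chain from such an $\mathcal F$: pick $S^*\in\mathcal F\setminus\{E\}$ inclusion-maximal; if $e\neq e'$ both lay in $E\setminus S^*$, a member $S\in\mathcal F$ separating them would give $S\cup S^*\in\mathcal F$ strictly between $S^*$ and $E$, contradicting maximality, so $E\setminus S^*=\{e_n\}$ is a singleton. Restricting $\mathcal F$ to $2^{S^*}$ (intersect each member with $S^*$) preserves all the properties on a ground set of size $n-1$, and induction yields a maximal chain $\varnothing=S_0\subset\dots\subset S_{n-1}=S^*\subset S_n=E$ inside $\mathcal F$. On this chain $y(S_i)=f(S_i)$ for all $i$, and subtracting consecutive equalities gives $y_{e_i}=f(S_i)-f(S_{i-1})$, as required.

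\emph{Main obstacle.} The substantive step is the ``only if'' direction: seeing that tightness is preserved under uncrossing (so the tight sets form a lattice), correctly translating extremality into the span/separation condition on incidence vectors, and then the short inductive chain-extraction. The ``if'' direction is routine once the elements of $T$ are enumerated in the chain order so that the marginal-gain estimates telescope.
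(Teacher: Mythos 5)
Your proof is correct. Note that the paper does not prove this statement at all: it is imported as a known result from Fujishige's monograph, so there is no in-paper argument to compare against. What you give is the standard textbook proof (greedy/triangularity for sufficiency; uncrossing of tight sets plus the rank-$n$ characterisation of extreme points and a chain-extraction induction for necessity), and each step checks out — in particular the telescoping bound $y(T)\le f(T)$ via diminishing returns applied along $T\cap S_{j_\ell-1}\subseteq S_{j_\ell-1}$, and the observation that a maximal member of $\mathcal F\setminus\{E\}$ must miss exactly one element. You are also right to flag the normalisation $f(\varnothing)=0$ explicitly: the theorem as stated in the paper leaves it implicit, but without it $y(E)=f(E)-f(\varnothing)\ne f(E)$ and the chain point would not even lie in $\cB(f)$ (and $\varnothing\in\mathcal F$, which anchors your chain at $S_0=\varnothing$, would fail).
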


\begin{corollary} \label{cor:extreme:point:monotone}
Let $f: 2^E \to \RR$ be submodular and monotone. Then
$\EX(\cB(f)) \subseteq \Rnn^E$, i.\,e., any extreme point of
the base polyhedron has only non-negative coordinates.
\end{corollary}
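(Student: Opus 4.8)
The plan is to combine the characterisation of extreme points from Theorem~\ref{thm:extreme:point} with the monotonicity hypothesis. By that theorem, every extreme point $y \in \EX(\cB(f))$ arises from some maximal chain $\varnothing = S_0 \subset S_1 \subset \dots \subset S_n = E$ via $y_{e_i} = f(S_i) - f(S_{i-1})$, where $\{e_i\} = S_i \setminus S_{i-1}$. So it suffices to show each such coordinate is non-negative.

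First I would observe that for each $i \in [n]$ we have $S_{i-1} \subseteq S_i$, so monotonicity of $f$ gives $f(S_{i-1}) \le f(S_i)$, whence $y_{e_i} = f(S_i) - f(S_{i-1}) \ge 0$. Since the map $i \mapsto e_i$ is a bijection from $[n]$ onto $E$ (the chain is maximal, so each step adds exactly one element and every element of $E$ is added exactly once), this shows every coordinate of $y$ is non-negative, i.e.\ $y \in \Rnn^E$. As $y$ was an arbitrary extreme point, $\EX(\cB(f)) \subseteq \Rnn^E$.

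There is essentially no obstacle here: the corollary is an immediate consequence of Theorem~\ref{thm:extreme:point} together with the definition of monotonicity, and the only thing to be careful about is noting that a maximal chain in $2^E$ has exactly $n+1$ links and realises each singleton increment exactly once, so that the formula $y_{e_i} = f(S_i) - f(S_{i-1})$ does indeed pin down all $n$ coordinates of $y$. One could alternatively phrase it without invoking the chain at all: any extreme point $y$ is a vertex of $\cB(f)$, hence by Theorem~\ref{thm:extreme:point} its coordinates are marginal gains $f(T \cup \{e\}) - f(T)$ along a chain, each of which is $\ge 0$ by monotonicity.

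\begin{proof}
Let $y \in \EX(\cB(f))$. By Theorem~\ref{thm:extreme:point} there is a maximal chain $\cC: \varnothing = S_0 \subset S_1 \subset \dots \subset S_n = E$ of $2^E$ such that, writing $\{e_i\} = S_i \setminus S_{i-1}$ for $i = 1, \dots, n$, we have $y_{e_i} = f(S_i) - f(S_{i-1})$. Since $\cC$ is maximal, $|S_i| = i$ for each $i$, so $e_1, \dots, e_n$ are pairwise distinct and $\{e_1, \dots, e_n\} = E$; thus the displayed formula determines every coordinate of $y$. For each $i \in \{1, \dots, n\}$ we have $S_{i-1} \subseteq S_i$, so monotonicity of $f$ yields $f(S_{i-1}) \le f(S_i)$, and therefore $y_{e_i} = f(S_i) - f(S_{i-1}) \ge 0$. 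Hence every coordinate of $y$ is non-negative, i.e.\ $y \in \Rnn^E$. Since $y \in \EX(\cB(f))$ was arbitrary, $\EX(\cB(f)) \subseteq \Rnn^E$.
\end{proof}
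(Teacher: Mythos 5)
Your proof is correct and follows exactly the same route as the paper's: invoke Theorem~\ref{thm:extreme:point} to write each coordinate of an extreme point as a marginal difference $f(S_i) - f(S_{i-1})$ along a maximal chain, apply monotonicity to conclude non-negativity, and note that maximality of the chain ensures every element of $E$ appears as some $e_i$. No further comment is needed.
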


\begin{proof}
        Suppose $y \in \EX(\cB(f))$ is an extreme point of the base polyhedron.
        By \rtheorem{extreme:point}, there is a maximal chain $\varnothing = S_0 \subset S_1 \subset \dots \subset S_n = E$
        such that $y_{e_i} = f(S_i) - f(S_{i - 1})$ for $1 \le i \le n$, where $\{ e_i \} = S_i \setminus S_{i - 1}$.
        By monotonicity,
        \begin{align*}
            y_{e_i} = f(S_i) - f(S_{i - 1}) = f(S_{i - 1} \cup \{ e_i \}) - f(S_{i - 1}) \ge f(S_{i - 1}) - f(S_{i - 1}) = 0
        \end{align*}
        for each $1 \le i \le n$. Since the chain is maximal, each element $e \in E$ occurs as some $e_i$,
        implying $y_e \ge 0$ for all $e \in E$.
    \end{proof}

    The next lemma expresses the values $f(A)$ of a submodular function in terms of a maximisation over the base polyhedron.
    \begin{lemma}
        \label{lem:extreme:points:scalar:product}
        Let $f: 2^E \to \RR$ be normalised submodular. Then
        \begin{align*}
            f(A) = \max_{y \in \EX(\cB(f))} \left\langle y, \mathbbm{1}_{A} \right\rangle
        \end{align*}
        for all $A \subseteq E$.
    \end{lemma}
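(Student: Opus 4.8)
The plan is to use Theorem~\ref{thm:extreme:point} to turn the right-hand side into the classical greedy computation for submodular functions. First I would recall that for a normalised submodular function $f$ (i.e.\ $f(\varnothing)=0$) the extreme points of the base polyhedron are exactly the vectors obtained by running the greedy algorithm along maximal chains: for a chain $\cC:\varnothing=S_0\subset S_1\subset\dots\subset S_n=E$ the associated extreme point $y^{\cC}$ has $y^{\cC}_{e_i}=f(S_i)-f(S_{i-1})$, where $\{e_i\}=S_i\setminus S_{i-1}$. This is precisely the content of Theorem~\ref{thm:extreme:point}.

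Next I would establish the two inequalities separately. For ``$f(A)\ge\langle y,\mathbbm 1_A\rangle$ for every $y\in\EX(\cB(f))$'': since $y\in\cB(f)\subseteq\cP(f)$, the defining inequality $y(S)\le f(S)$ applied with $S=A$ gives $\langle y,\mathbbm 1_A\rangle=y(A)\le f(A)$. For the matching lower bound, I would exhibit a specific chain achieving equality. Pick any maximal chain $\cC$ that passes through $A$, i.e.\ one in which $A=S_{|A|}$ appears as an initial segment; concretely, order the elements of $A$ first (in any order), then the elements of $E\setminus A$. Let $y^{\cC}$ be the corresponding extreme point. Then by telescoping,
\begin{align*}
  \langle y^{\cC},\mathbbm 1_A\rangle=\sum_{i=1}^{|A|}\big(f(S_i)-f(S_{i-1})\big)=f(S_{|A|})-f(S_0)=f(A)-f(\varnothing)=f(A),
\end{align*}
using normalisation $f(\varnothing)=0$. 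Hence the maximum over $\EX(\cB(f))$ is at least $f(A)$, and combined with the upper bound we get equality.

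The only subtle point — and the step I'd expect to need the most care — is making sure the ``chain through $A$'' is genuinely a \emph{maximal} chain of $2^E$ (so that Theorem~\ref{thm:extreme:point} applies verbatim and produces an extreme point), and that the telescoping sum really only involves coordinates indexed by elements of $A$; this is where the choice to list the elements of $A$ as the first $|A|$ steps of the chain is essential, since then $S_i\subseteq A$ for $i\le|A|$ and $\{e_1,\dots,e_{|A|}\}=A$ exactly. Everything else is routine. (Monotonicity is not needed here — only normalisation — although it does ensure the $y^{\cC}$ are non-negative, cf.\ Corollary~\ref{cor:extreme:point:monotone}.)
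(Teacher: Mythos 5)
Your proposal is correct and follows essentially the same route as the paper's proof: the upper bound from $y(A)\le f(A)$ for $y\in\cB(f)\subseteq\cP(f)$, and the lower bound by constructing the extreme point associated with a maximal chain that lists the elements of $A$ first, then telescoping and using $f(\varnothing)=0$. No gaps.
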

    \begin{proof}
        The inequality $\max_{y \in \EX(\cB(f))} \left\langle y, \mathbbm{1}_{A} \right\rangle \le f(A)$
        is immediate since any $y \in \cB(f)$ satisfies $y(S) \le f(S)$ for all $S \subseteq E$, including $S = A$.
        Hence
        \begin{align*}
            \langle y, \mathbbm{1}_A \rangle = \sum_{e \in E} y_e \cdot \mathbbm{1}_{A}(e) = \sum_{e \in A} y_e = y(A) \le f(A).
        \end{align*}
        To see that the value $f(A)$ is actually attained, we construct a suitable extreme point using \rtheorem{extreme:point}.
        Suppose $A = \{ e_1, \dots, e_k \}$. Order the remaining elements arbitrarily, so $E \setminus A = \{ e_{k + 1}, \dots, e_n \}$.
        Now define a maximal chain
        \begin{align*}
            \cC: \varnothing = S_0 \subset S_1 \subset \dots \subset S_n = E
        \end{align*}
        by letting $S_i := \{ e_1, \dots, e_i \}$ for $0 \le i \le n$. Next, we set
        \begin{align*}
            y_{e_i} := f(S_i) - f(S_{i - 1})
        \end{align*}
        for $1 \le i \le n$, which defines a vector $y \in \RR^E$ that -- by
        construction -- happens to be an extreme point of $\cB(f)$
        by \rtheorem{extreme:point}. Moreover, it satisfies
        \begin{align*}
            y(A) = \sum_{e \in A} y_e = \sum_{i = 1}^k y_{e_i} = \sum_{i = 1}^k \left( f(S_i) - f(S_{i - 1}) \right)
            = f(S_k) - f(\varnothing) = f(S_k) = f(A)
        \end{align*}
        as $f$ is normalised.
        Thus, $y$ witnesses that $\max_{y \in \EX(\cB(f))} \left\langle y, \mathbbm{1}_{A} \right\rangle \ge f(A)$.
    \end{proof}

\subsection{Counterexample} \label{subsec:counterexample}

   We will now disprove the bound $\sum_{i = 1}^N p_i \le Bn$ claimed by
    Rafiey and Yoshida (cf. Claim 3.3 in~\cite{Rafiey22:aaai}, proof in the
    appendix) by providing a counterexample.

    The counterexample is inspired by Cohen et al.~\cite{cohen}, who use it to establish a lower bound on the size of a cut sparsifier.
    This is exactly what we need to disprove $\sum_{i = 1}^N p_i \le Bn$: The ability to isolate edges in order to have peak contributions of $1$.
    Consider the directed complete bipartite graph $G = (V, E)$ with bipartition $V = L \cup R$ where $L = \{ u_1, \dots, u_5 \}$,
    $R = \{ v_1, \dots, v_5 \}$ and $E = L \times R$ as depicted in \rfig{peak:contributions:counterexample}.
    For each edge $e = (u, v) \in E$, we define a cut function
    \begin{align*}
        f_e: 2^V \to \RR,
        \quad S \mapsto
        \begin{cases}
            1 & \text{ if } u \in S \text{ and } v \notin S,  \\
            0 & \text{ otherwise. }
        \end{cases}
    \end{align*}
    It is well-known (and not hard to see) that the $f_e$'s are submodular. 

    The sum $F := \sum_{e \in E} f_e$
    is known as the (directed) cut function of $G$, since $F(S)$ is the number of edges cut connecting a vertex in $S$ to a vertex outside $S$.
    Let $p_e = \max_{S \subseteq V} \frac{f_e(S)}{F(S)}$ denote the peak contributions of the $f_e$'s.

    \begin{figure}[htbp]
        \centering
        \begin{tikzpicture}
            \node[draw, circle] (u1) at (0, 0) {$u_1$};
            \node[draw, circle, fill = lightgray] (u2) at (0, -1.5) {$u_2$};
            \node[draw, circle] (u3) at (0, -3) {$u_3$};
            \node[draw, circle] (u4) at (0, -4.5) {$u_4$};
            \node[draw, circle] (u5) at (0, -6) {$u_5$};

            \node[draw, circle, fill = lightgray] (v1) at (4, 0) {$v_1$};
            \node[draw, circle, fill = lightgray] (v2) at (4, -1.5) {$v_2$};
            \node[draw, circle, fill = lightgray] (v3) at (4, -3) {$v_3$};
            \node[draw, circle, fill = lightgray] (v4) at (4, -4.5) {$v_4$};
            \node[draw, circle] (v5) at (4, -6) {$v_5$};

            \draw[-stealth, dotted, thick, gray] (u1) to (v1);
            \draw[-stealth, dotted, thick, gray] (u1) to (v2);
            \draw[-stealth, dotted, thick, gray] (u1) to (v3);
            \draw[-stealth, dotted, thick, gray] (u1) to (v4);
            \draw[-stealth, dotted, thick, gray] (u1) to (v5);

            \draw[-stealth, dotted, thick, gray] (u2) to (v1);
            \draw[-stealth, dotted, thick, gray] (u2) to (v2);
            \draw[-stealth, dotted, thick, gray] (u2) to (v3);
            \draw[-stealth, dotted, thick, gray] (u2) to (v4);
            \draw[-stealth, thick] (u2) to (v5);

            \draw[-stealth, dotted, thick, gray] (u3) to (v1);
            \draw[-stealth, dotted, thick, gray] (u3) to (v2);
            \draw[-stealth, dotted, thick, gray] (u3) to (v3);
            \draw[-stealth, dotted, thick, gray] (u3) to (v4);
            \draw[-stealth, dotted, thick, gray] (u3) to (v5);

            \draw[-stealth, dotted, thick, gray] (u4) to (v1);
            \draw[-stealth, dotted, thick, gray] (u4) to (v2);
            \draw[-stealth, dotted, thick, gray] (u4) to (v3);
            \draw[-stealth, dotted, thick, gray] (u4) to (v4);
            \draw[-stealth, dotted, thick, gray] (u4) to (v5);

            \draw[-stealth, dotted, thick, gray] (u5) to (v1);
            \draw[-stealth, dotted, thick, gray] (u5) to (v2);
            \draw[-stealth, dotted, thick, gray] (u5) to (v3);
            \draw[-stealth, dotted, thick, gray] (u5) to (v4);
            \draw[-stealth, dotted, thick, gray] (u5) to (v5);
        \end{tikzpicture}
        \caption{Graph $G$ where each edge can be isolated by a cut}
        \label{fig:peak:contributions:counterexample}
    \end{figure}
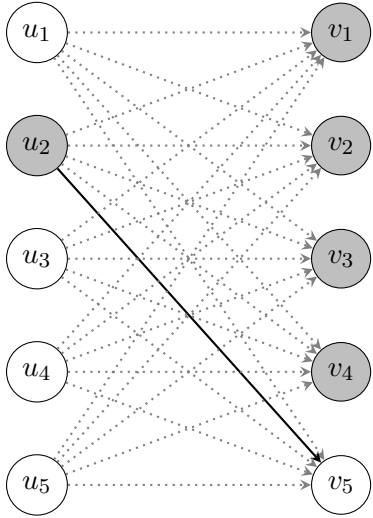

    We will now establish how this example violates $\sum_{e \in E} p_e \le Bn$.
    First,
    \begin{align*}
        B = \max_{e \in E} |\EX(\cB(f_e))| = 2
    \end{align*}
    because each base polyhedron $\cB(f_i)$ has two extreme points.
    An easy way to see this goes by \rtheorem{extreme:point}.
    For any maximal chain $\varnothing = V_0 \subset V_1 \subset \dots \subset V_n = V$,
    we have
    \begin{align*}
        f_e(V_{i + 1}) - f_e(V_i) = \begin{cases}
            1 & \text{ if } V_{i + 1} \setminus V_i = \{ u \} \text{ and } v \notin V_i, \\
            -1 & \text{ if } V_{i + 1} \setminus V_i = \{ v \} \text{ and } u \in V_i, \\
            0 & \text{ otherwise. }
        \end{cases}
    \end{align*}
    Thus, the only extreme points are $y$ and $y'$, where $y_e = 0$ for all $e \in E$,
    and $y'_u = 1$, $y'_v = -1$, $y'_e = 0$ for all $e \in E \setminus \{ u, v \}$.
    Since $G$ has $10$ vertices and the vertex set $V = L \cup R$ is the ground set, Claim 3.3 in~\cite{Rafiey22:aaai} asserts that $\sum_{e \in E} p_e \le Bn = 20$.
    By showing $p_e = 1$ for each $e \in E$, we establish
    \begin{align*}
        \sum_{e \in E} p_e = |E| = |L| \cdot |R| = 5 \cdot 5 = 25 > 20,
    \end{align*}
    contradicting the assertion.
    Fix an edge $e = (u, v)$. Letting $S := \{ u \} \cup \left( R \setminus \{ v \} \right)$, we get $f_e(S) = 1$
    and $f_{e'}(S) = 0$ for each edge $e' \neq e$. The latter is because $S$ either
    \begin{itemize}
        \item contains the right endpoint of $e'$, or
        \item does not contain the left endpoint of $e'$.
    \end{itemize}
    These are mutually exclusive. Both cases lead to $f_{e'}(S) = 0$ by definition of $f_{e'}$.
    Thus,
    \begin{align*}
        p_e = \max_{A \subseteq V} \frac{f_e(A)}{F(A)}
        \ge \frac{f_e(S)}{F(S)}
        = \frac{f_e(S)}{f_e(S) + \sum_{e' \neq e} f_{e'}(S)}
        = \frac{1}{1 + 0} = 1,
    \end{align*}
    implying $p_e = 1$, as desired.

\subsection{Monotone Submodularity} \label{subsec:submodular}

    Now we have all tools required to establish an upper bound on $\sum_{i =
    1}^N p_i$, making the size bound in \rtheorem{core}\,\ref{thm:core:size}
    simpler and more expressive.
    
    The following lemma appears as Claim 3.3 in~\cite{Rafiey22:aaai} with a very similar proof.
    However, the proof of Lemma~\ref{lem:pi:submodular} below only works if the coordinates of the extreme points of the base polyhedra are non-negative.
    We give greater detail and explanations of the individual steps and point out where exactly the non-negativity is needed,
    hence where the proof in~\cite{Rafiey22:aaai} fails. \rcorollary{extreme:point:monotone} ensures non-negativity when
    all $f_i$'s are monotone.
    For the following lemma, we specialise to $\cD = 2^E$ for some ground set $E$ of size $n$.

    \begin{lemma}
        \label{lem:pi:submodular}
        Let normalised, monotone submodular functions $f_1, \dots, f_N: 2^E \to \RR$ be given
        with peak contributions defined with respect to $F = f_1 + \dots + f_N$.
        Then $\sum_{i = 1}^N p_i \le Bn$, where $B = \max_{1 \le i \le N}
        |\EX(\cB(f_i))|$
        denotes the maximum number of extreme points of the base polyhedra of the $f_i$'s.
    \end{lemma}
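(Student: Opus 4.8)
The plan is to follow the strategy of Claim~3.3 in~\cite{Rafiey22:aaai}: express each value $f_i(A)$ through the extreme points of $\cB(f_i)$ via \rlemma{extreme:points:scalar:product}, use the non-negativity of those extreme points from \rcorollary{extreme:point:monotone}, and then charge each peak contribution to the singletons $\{e\}$, $e\in E$. Throughout one may assume $F\not\equiv 0$, since otherwise all $f_i\equiv 0$ and the bound is trivial.

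First I would fix an index $i$ and a set $A\subseteq E$ with $F(A)\neq 0$. By \rlemma{extreme:points:scalar:product} we have $f_i(A)=\langle y^A,\mathbbm{1}_A\rangle$ for some $y^A\in\EX(\cB(f_i))$. Since every extreme point of $\cB(f_i)$ has non-negative coordinates (\rcorollary{extreme:point:monotone}) and $\langle y^A,\mathbbm{1}_A\rangle$ is one of the at most $B$ terms $\langle y,\mathbbm{1}_A\rangle$, $y\in\EX(\cB(f_i))$, with the remaining terms non-negative, I obtain the key inequality
\[
  f_i(A)\ \le\ \sum_{y\in\EX(\cB(f_i))}\langle y,\mathbbm{1}_A\rangle\ =\ \sum_{y\in\EX(\cB(f_i))}\ \sum_{e\in A}y_e .
\]
This is precisely where monotonicity is indispensable, and where the argument in~\cite{Rafiey22:aaai} tacitly relies on it: without non-negativity the individual terms $\langle y,\mathbbm{1}_A\rangle$ may be negative and the inequality breaks down, consistent with the counterexample of Section~\ref{subsec:counterexample}.

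Next I would pass from $A$ to singletons. For $e\in A$ with $F(\{e\})>0$, monotonicity of $F$ gives $F(A)\ge F(\{e\})$; for $e\in A$ with $F(\{e\})=0$, each $f_j$ is non-negative (normalised and monotone) so $f_i(\{e\})=0$, whence $y_e=y(\{e\})\le f_i(\{e\})=0$ and, together with $y_e\ge 0$, in fact $y_e=0$ for every extreme point $y$. Dividing by $F(A)$ and discarding the zero terms yields
\[
  \frac{f_i(A)}{F(A)}\ \le\ \sum_{y\in\EX(\cB(f_i))}\ \sum_{e:\,F(\{e\})>0}\frac{y_e}{F(\{e\})},
\]
a quantity independent of $A$, hence also an upper bound on $p_i$. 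Summing over $i$, exchanging the order of summation, and using both $|\EX(\cB(f_i))|\le B$ and $y_e=y(\{e\})\le f_i(\{e\})$ (valid since $\cB(f_i)\subseteq\cP(f_i)$), one gets $\sum_{i=1}^N\sum_{y\in\EX(\cB(f_i))}y_e\le B\sum_{i=1}^N f_i(\{e\})=B\,F(\{e\})$ for each fixed $e$, and therefore $\sum_{i=1}^N p_i\le B\,|\{e\in E:F(\{e\})>0\}|\le Bn$.

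The conceptual core is the first displayed inequality, which is exactly where \rcorollary{extreme:point:monotone} (i.e.\ monotonicity) enters; the rest is bookkeeping, the only mildly delicate part being the isolation of the elements $e$ with $F(\{e\})=0$ in order to keep all divisions well-defined. I would additionally remark that retaining the single optimal extreme point $y^A$ instead of summing over all of $\EX(\cB(f_i))$, and then running the same two steps, actually yields the sharper bound $\sum_{i=1}^N p_i\le n$; since $B\ge 1$ this implies the stated inequality, but $Bn$ already suffices for the applications and matches the formulation in~\cite{Rafiey22:aaai}.
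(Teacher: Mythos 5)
Your proof is correct, and while it starts from the same two ingredients as the paper (\rlemma{extreme:points:scalar:product} together with the non-negativity of extreme points from \rcorollary{extreme:point:monotone}, and the same upper bound $f_i(A)\le\sum_{y\in\EX(\cB(f_i))}\langle y,\mathbbm{1}_A\rangle$ on the numerator), it finishes along a genuinely different route. The paper never touches $F(A)$ directly: it lower-bounds each $F(A)=\sum_j\max_y\langle y,\mathbbm{1}_A\rangle$ by the \emph{average} over extreme points, applies the mediant inequality $\frac{\sum_e x_e}{\sum_e q_e}\le\max_e\frac{x_e}{q_e}$ to reduce to a single element, and only then sums over $e\in E$ and swaps summations. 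You instead use the monotonicity of $F$ itself to replace $F(A)$ by $F(\{e\})$ for each $e\in A$, and the defining inequality $y_e\le f_i(\{e\})$ of the submodular polyhedron to charge everything back to $F(\{e\})$; your handling of the elements with $F(\{e\})=0$ is the right way to keep the divisions legitimate (the paper sidesteps this by never dividing by singleton values). Both arguments are sound and both isolate monotonicity as the point where the claim of~\cite{Rafiey22:aaai} needs repair. What your route buys is transparency and, as your closing remark correctly observes, a strictly stronger conclusion: keeping only the single optimal extreme point (equivalently, just using the subadditivity $f_i(A)\le\sum_{e\in A}f_i(\{e\})$ of normalised monotone submodular functions, so the base-polyhedron machinery is not needed at all) gives $\sum_{i=1}^N p_i\le n$, with no factor of $B$. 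Since $B\ge 1$ this subsumes the stated bound and would in fact improve the sparsifier sizes quoted in Appendix~\ref{app:submodular} by a factor of $B$; it would be worth stating that refinement explicitly rather than only matching the formulation of~\cite{Rafiey22:aaai}.
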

    \begin{proof}
        The key ingredient is \rlemma{extreme:points:scalar:product}. It gives us
        \begin{align}
            \label{eq:pi:submodular:1}
            \sum_{i = 1}^N p_i
            = \sum_{i = 1}^N \max_{A \subseteq E} \frac{f_i(A)}{F(A)}
            = \sum_{i = 1}^N \max_{A \subseteq E} \frac
                {\max_{y \in \EX(\cB(f_i))} \langle y, \mathbbm{1}_A \rangle}
                {\sum_{j = 1}^N \max_{y \in \EX(\cB(f_j))} \langle y, \mathbbm{1}_A \rangle}
        \end{align}
        using the definition of the $p_i$'s.
        Since all $f_i$'s are monotone, we know that $y \in \Rnn^E$ for all $y
        \in \cB(f_i)$ for every $1 \le i \le N$,
        so all terms $\langle y, \mathbbm{1}_A \rangle$ involved in the above equation are non-negative.
        For non-negative numbers $x_1, \dots, x_n \in \Rnn$, we always have
        \begin{align}
            \label{eq:pi:submodular:aux}
            \frac{x_1 + \dots + x_n}{n} \overset{(1)}{\le} \max_{1 \le i \le n} x_i \overset{(2)}{\le} x_1 + \dots + x_n,
        \end{align}
        i.\,e., the maximum lies between the average and the sum.
        This is a trick used in the proof by Rafiey and Yoshida~\cite{Rafiey22:aaai} that does not hold for general submodular functions.
        Non-negativity plays an important role in a couple of steps in the remainder of this proof.
        Applying (1) to the denominator and (2) to the numerator
        in \requation{pi:submodular:1}, we obtain
        \begin{align}
            \label{eq:pi:submodular:2}
                \frac
                {\max_{y \in \EX(\cB(f_i))} \langle y, \mathbbm{1}_A \rangle}
                {\sum_{j = 1}^N \max_{y \in \EX(\cB(f_j))} \langle y, \mathbbm{1}_A \rangle}
            &\le \frac
                {\sum_{y \in \EX(\cB(f_i))} \langle y, \mathbbm{1}_A \rangle}
                {\sum_{j = 1}^N \frac{1}{|\EX(\cB(f_j))|} \sum_{y \in \EX(\cB(f_j))} \langle y, \mathbbm{1}_A \rangle} \\
            \label{eq:pi:submodular:3}
            &= \frac
                {\sum_{e \in A} \sum_{y \in \EX(\cB(f_i))} y_e}
                {\sum_{e \in A} \sum_{j = 1}^N \frac{1}{|\EX(\cB(f_j))|} \sum_{y \in \EX(\cB(f_j))} y_e},
        \end{align}
        where the equality step is by $\langle y, \mathbbm{1}_A \rangle = \sum_{e \in A} y_e$.
        Next, note that $\sum_{y \in \EX(\cB(f_i))} y_e \ge 0$
        and $\sum_{j = 1}^N \frac{1}{|\EX(\cB(f_j))|} \sum_{y \in \EX(\cB(f_j))} y_e \ge 0$ for all $e \in A$
        because $y_e \ge 0$ for all $y \in \cB(f_i)$ and $e \in E$, so we can employ the inequality
        \begin{align}
            \frac{x_1 + \dots + x_n}{q_1 + \dots + q_n} \le \max_{1 \le i \le n} \frac{x_i}{q_i}
        \end{align}
        that applies to all non-negative numbers $x_1, \dots, x_n, q_1, \dots, q_n \in \Rnn$, giving
        \begin{align}
            \frac
                {\sum_{e \in A} \sum_{y \in \EX(\cB(f_i))} y_e}
                {\sum_{e \in A} \sum_{j = 1}^N \frac{1}{|\EX(\cB(f_j))|} \sum_{y \in \EX(\cB(f_j))} y_e}
            \le \max_{e \in A} \frac
                {\sum_{y \in \EX(\cB(f_i))} y_e}
                {\sum_{j = 1}^N \frac{1}{|\EX(\cB(f_j))|} \sum_{y \in \EX(\cB(f_j))} y_e}.
        \end{align}
        Combining this with \requation{pi:submodular:2} and \requation{pi:submodular:3}, we can extend \requation{pi:submodular:1} to
        \begin{align}
            \label{eq:pi:submodular:4}
            \sum_{i = 1}^N p_i
            &\le \sum_{i = 1}^N \max_{A \subseteq E} \max_{e \in A} \frac
                {\sum_{y \in \EX(\cB(f_i))} y_e}
                {\sum_{j = 1}^N \frac{1}{|\EX(\cB(f_j))|} \sum_{y \in \EX(\cB(f_j))} y_e} \\
            \label{eq:pi:submodular:5}
            &= \sum_{i = 1}^N \max_{e \in E} \frac
                {\sum_{y \in \EX(\cB(f_i))} y_e}
                {\sum_{j = 1}^N \frac{1}{|\EX(\cB(f_j))|} \sum_{y \in \EX(\cB(f_j))} y_e} \\
            \label{eq:pi:submodular:6}
            &\le \sum_{i = 1}^N \sum_{e \in E} \frac
                {\sum_{y \in \EX(\cB(f_i))} y_e}
                {\sum_{j = 1}^N \frac{1}{|\EX(\cB(f_j))|} \sum_{y \in \EX(\cB(f_j))} y_e},
        \end{align}
        where \requation{pi:submodular:5} follows because maximising over all $A \subseteq E$ and $e \in A$ is the same as
        maximising over all $e \in E$ here,
        since the RHS only depends on $e$. \requation{pi:submodular:6} follows from \requation{pi:submodular:aux} again --
        with all quotients being non-negative because the numerators and denominators are.
        A few algebraic steps get us to the claim now. Swapping $\sum_{i = 1}^N \dots$ with $\sum_{e \in E} \dots$
        and using $|\EX(\cB(f_j))| \le \max_{1 \le \ell \le n} |\EX(\cB(f_{\ell}))| = B$, we get
        \begin{align}
            &\sum_{i = 1}^N \sum_{e \in E} \frac
                {\sum_{y \in \EX(\cB(f_i))} y_e}
                {\sum_{j = 1}^N \frac{1}{|\EX(\cB(f_j))|} \sum_{y \in \EX(\cB(f_j))} y_e} \\
            &= \sum_{e \in E} \sum_{i = 1}^N \frac
                {\sum_{y \in \EX(\cB(f_i))} y_e}
                {\sum_{j = 1}^N \frac{1}{|\EX(\cB(f_j))|} \sum_{y \in \EX(\cB(f_j))} y_e} \\
            &\le \sum_{e \in E} \sum_{i = 1}^N \frac
                {\sum_{y \in \EX(\cB(f_i))} y_e}
                {\frac{1}{B}\sum_{j = 1}^N \sum_{y \in \EX(\cB(f_j))} y_e} \\
            &= \sum_{e \in E} B = Bn,
        \end{align}
        leading to the claimed $\sum_{i = 1}^N p_i \le Bn$ via \requation{pi:submodular:4},
        \requation{pi:submodular:5} and \requation{pi:submodular:6}.
    \end{proof}

    What makes this lemma interesting is that the bound $\sum_{i = 1}^N p_i \le Bn$ does not depend on $N$.
    In cases where sparsification is interesting, we usually have $N \gg n$.

    \begin{remark}
        \label{remark:monotone}
        If all $f_i$'s are monotone, we can at least say the following, which Rafiey and Yoshida~\cite{Rafiey22:aaai} already observed.
        There is an algorithm based on the ellipsoid method that,
        given two monotone submodular functions $f, g: 2^E \to \Rnn$, approximates $\max_{A \subseteq E} \frac{f(A)}{g(A)}$
        up to a factor of $\cO(\sqrt{n} \log{n})$ in polynomial time (see Theorem 3.4 in~\cite{Rafiey22:aaai} and~\cite{Bai16:icml}).
        It can be used to obtain approximations $\widehat{p}_1, \dots, \widehat{p}_N$ such that $p_i \le \widehat{p}_i \le \cO(\sqrt{n} \log{n}) p_i$
        for $1 \le i \le N$. By \rcorollary{core}, we can execute \ralg{core} with the $\widehat{p}_i$'s and obtain an $\eps$-sparsifier of
        expected size $\cO\left( \frac{n + \log{\frac{1}{\delta}}}{\eps^2} \sum_{i = 1}^N \widehat{p}_i \right)$ in polynomial time.
        Since the $f_i$'s are monotone, \rlemma{pi:submodular} applies and the bound becomes
        $\cO\left( \frac{B n^{2.5} \log{n}}{\eps^2} \right)$
        for fixed $\delta$.
    \end{remark}

\section{Monotone Submodular Functions of Bounded Arity}
\label{app:bounded-arity}

The motivation of submodular sparsification partly arises from the
sparsification of graph cuts, which have been extensively studied not only for
graphs but also for
hypergraphs~\cite{Soma19:soda,Kogan15:itcs,Chen20:focs}.
An improvement in both size and construction time of the $\eps$-sparsifier
is possible if the submodular function $F$ in question can be decomposed as $F = f_1 + \dots + f_N$ into $f_i$'s of \emph{bounded arity}.
This covers a broad range of submodular functions such as cut functions.
As a simple application of our core algorithm, we will show that an $\eps$-sparsifier of expected size $\cO(n^2
/ \eps^2)$ can be found (for a fixed $\delta$) if all constituent functions
$f_i$ of $F$ are monotone (needed in Lemma~\ref{lem:pi:submodular}) and submodular (needed in Lemma~\ref{lem:bounded:arity:reduction} and Lemma~\ref{lem:bounded:arity:extreme:points}).

In detail, consider a set function $f: 2^E \to \RR$.
Instead of writing $f(S)$ for $S \subseteq E$ to describe the values of $f$,
we may encode $S$ as a Boolean vector of $|E|$ coordinates.
Fix an arbitrary order of elements in $E$, say $E = \{ e_1, \dots, e_n \}$ and set the $i^\text{th}$ coordinate of the vector corresponding to $S$ to $1$
if and only if $e_i \in S$. This standard identification allows us to write $f(x_1, \dots, x_n)$ for $(x_1, \dots, x_n) \in \{ 0, 1 \}^n$ to
describe the values of $f$.
From this perspective, it is intuitive to say that $f$ has \emph{arity $a$} if its values only depend on $a$ arguments.
It is worth noting that this representation aligns with how one would represent $f$
in the context of constraint satisfaction problems, as part of a valued constraint language~\cite{KTZ15:sicomp}.
Note that $f$ having arity $a$ means there exists a set $C \subseteq E$ of size $|C| = a$
such that $f(S) = f(S \cap C)$ for all $S \subseteq E$.
We call $C$ the \emph{effective support} of $f$. 

We start from a decomposition $F = f_1 + \dots + f_N$
of a submodular function $F$ into submodular functions $f_1, \dots, f_N$ of arity $\le a$ for some constant $a$.
Let $C_1, \dots, C_N$ denote the effective supports of $f_1, \dots, f_N$, respectively.
In order to construct a sparsifier, we want to apply \ralg{core}.
To do so efficiently, we need to be able to compute the peak contributions, i.\,e., maximise the quotients $\frac{f_i(A)}{F(A)}$ over all $A \subseteq E$
for each $1 \le i \le N$. Since the numerators only depend on how $A$ intersects the $C_i$'s -- which are of constant size --
we might brute-force this intersection and minimise the denominator with respect
to an extra constraint.

\begin{remark}
    To make this idea work, it is vital for us to \emph{know} the $C_i$'s.
    There might be scenarios where all $f_i$'s are known to have bounded arity while the elements they are supported on are not computationally available.
    However, if $f$ is monotone and non-negative, the effective support can be found by testing each element $e \in E$
    for $f(\{ e \}) > f(\varnothing)$. If the test fails, we know that $e$ does not influence the values of $f$ and hence does not belong to its effective support.
\end{remark}

\begin{algorithm}[htbp]
    \begin{algorithmic}[1]
        \Require Functions $f_1, \dots, f_N$ with effective supports $C_1, \dots, C_N$ of size $\le a$
        \Ensure $p_i = \max_{A \subseteq E} \frac{f_i(A)}{F(A)}$ for each $1 \le i \le N$
        \For{$i = 1, \dots, N$}
            \State $p_i \gets 0$
            \ForAll{$H \subseteq C_i$}
                \State Compute a minimiser $A^*$ of $\widehat{F}: 2^{E \setminus C_i} \to \RR, A \mapsto F(A \cup H)$
                \State $p_i \gets \max\left\{ p_i, \frac{f_i(H)}{\widehat{F}(A^*)} \right\}$
            \EndFor
        \EndFor
        \State \Return $p_1, \dots, p_N$
    \end{algorithmic}
    \caption{Computing the peak contributions for $f_i$'s of bounded arity}
    \label{alg:bounded:arity}
\end{algorithm}

Refer to \ralg{bounded:arity} for a formal description of how the peak contributions are computed.
For each $1 \le i \le N$, the quantity $p_i$ is computed as follows. We first choose a set $H \subseteq C_i$
and then restrict ourselves to sets $A \subseteq E$ with $A \cap C_i = H$.
This implies $f_i(A) = f_i(A \cap C_i) = f_i(H)$ as $C_i$ is the effective support of $f_i$.
Hence, we are left with the quotient $f_i(H) / F(A)$ that has to be maximised over all $A \subseteq E$ with $A \cap C_i = H$.
Since the numerator no longer depends on $A$, the objective becomes minimising $F(A)$ subject to $A \cap C_i = H$.
It turns out that this is just a submodular minimisation problem of an auxiliary function $\widehat{F}$.
The following statement is key to the correctness of this approach.

\begin{lemma}
    \label{lem:bounded:arity:reduction}
    Let $f, F: 2^E \to \RR$ be submodular functions and let $C \subseteq E$ denote the effective support of $f$.
    Then
    \begin{align*}
        \max_{A \subseteq E} \frac{f(A)}{F(A)} = \max_{H \subseteq C} \frac{f(H)}{\min_{A \subseteq E \setminus C} \widehat{F}(A)}
    \end{align*}
    where $\widehat{F}: 2^{E \setminus C} \to \RR, A \mapsto F(A \cup H)$.
\end{lemma}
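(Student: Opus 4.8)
The plan is to prove the identity by partitioning the subsets $A\subseteq E$ according to their intersection with the effective support $C$ of $f$, and on each block reducing the maximisation of $f(A)/F(A)$ to a pure minimisation of $F$.

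First I would fix $H\subseteq C$ and record the elementary bijection $A'\mapsto A'\cup H$ between $2^{E\setminus C}$ and $\{A\subseteq E: A\cap C=H\}$: indeed $(A'\cup H)\cap C=H$ since $H\subseteq C$ and $A'\cap C=\varnothing$, and conversely any such $A$ is recovered as $A'=A\setminus C$. Since $C$ is the effective support of $f$, every $A$ in this block satisfies $f(A)=f(A\cap C)=f(H)$, so the numerator is constant on the block. Hence, restricting to sets with $F(A)\neq0$ as in the definition of the peak contribution and using $f(H)\ge0$ (the functions here are non-negative), the best ratio in this block is obtained by making $F(A)$ as small as possible:
\[
  \max_{A:\,A\cap C=H}\frac{f(A)}{F(A)}=\frac{f(H)}{\min_{A:\,A\cap C=H}F(A)}=\frac{f(H)}{\min_{A'\subseteq E\setminus C}F(A'\cup H)}=\frac{f(H)}{\min_{A'\subseteq E\setminus C}\widehat{F}(A')},
\]
where the middle equality is exactly the bijection above and the last one is the definition of $\widehat{F}$.

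Second, since every $A\subseteq E$ lies in exactly one block (indexed by $H=A\cap C$), we may split the outer maximum as $\max_{A\subseteq E}=\max_{H\subseteq C}\max_{A:\,A\cap C=H}$, and combining this with the per-block identity above yields $\max_{A\subseteq E}\frac{f(A)}{F(A)}=\max_{H\subseteq C}\frac{f(H)}{\min_{A'\subseteq E\setminus C}\widehat{F}(A')}$, which is the claim.

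I do not expect a genuine obstacle here; the only subtlety to handle carefully is the treatment of degenerate divisions — sets $A$ with $F(A)=0$ are excluded in accordance with the definition of the peak contributions, and the step ``maximise the ratio by minimising $F(A)$'' uses non-negativity of $f$ — which I would dispose of in one sentence invoking the standing assumption that the constituent functions are non-negative (normalised) monotone submodular. It is worth noting that submodularity of $f$ and $F$ plays no role in the identity itself; it enters only afterwards, to ensure that $\widehat{F}$ is again submodular (a fixed shift of a submodular function is submodular), so that the inner minimisation in \ralg{bounded:arity} is a tractable submodular minimisation problem.
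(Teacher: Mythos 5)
Your proof is correct and follows essentially the same route as the paper's: both rest on the correspondence $A \leftrightarrow (A\cap C,\, A\setminus C)$ together with $f(A)=f(A\cap C)$, the paper presenting it as two inequalities while you present it as a partition of $2^E$ into blocks indexed by $H=A\cap C$ with an exact per-block identity. Your explicit handling of the $F(A)=0$ and $f(H)\ge 0$ caveats (which the paper glosses over in the same places) and your observation that submodularity plays no role in the identity itself, only in making the inner minimisation tractable, are both accurate.
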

\begin{proof}
    ``{$\le$}'' Let $A^*$ be a maximiser of the LHS.
    Set $H := A^* \cap C$ and $A := A^* \setminus C$.
    We now have
    \begin{align*}
        \frac{f(A^*)}{F(A^*)}
        = \frac{f\left(A^* \cap C\right)}{F\left(\left(A^* \setminus C\right) \cup \left(A^* \cap C\right) \right)}
        = \frac{f(H)}{F\left(\left( A^* \setminus C \right) \cup H\right)}
        = \frac{f(H)}{\widehat{F}(A)}
    \end{align*}
    with $H \subseteq C$ and $A \subseteq E \setminus C$, so this is certainly $\le$ the RHS above.

    ``{$\ge$}'' Let $(H^*, A^*)$ be a maximising pair of the RHS.
    Letting $A := A^* \cup H^*$, we get
    \begin{align*}
        \frac{f(H^*)}{\widehat{F}(A^*)}
        = \frac{f(A \cap C)}{F(A^* \cup H)}
        = \frac{f(A)}{F(A)},
    \end{align*}
    which is always $\le$ the LHS as the maximisation is over all $A$.
\end{proof}

One more observation allows us to give a more precise size bound on the sparsifier in the bounded arity case.
By \rtheorem{core}, the $\eps$-sparsifier produced by \ralg{core} is of expected size
$\cO\left(\frac{\log{|\mathcal{D}|} + \log{\frac{1}{\delta}}}{\eps^2} \sum_{i = 1}^N p_i \right)$,
where $\sum_{i = 1}^N p_i \le Bn$ by \rlemma{pi:submodular}, if the $f_i$'s are all monotone.
In general, the quantity $B$ might be large. However, if the functions all have bounded arity, we can show that $B = \cO(1)$,
allowing for an $\eps$-sparsifier of expected size $\cO(n^2 / \eps^2)$ in case
of monotone constituent functions -- assuming $\delta$ is
constant.

The following statement claimed but not shown in~\cite{Rafiey22:aaai} gives a
bound on $B$ that is independent of $n$ or $N$ but just depends on the arity.
For completeness, we give a proof.

\begin{lemma}
    \label{lem:bounded:arity:extreme:points}
    Suppose $f: 2^E \to \RR$ is submodular with arity $\le a$.
    Then the base polyhedron $\cB(f)$ has at most $2^{a^2}$ extreme points, i.\,e., $|\EX(\cB(f))| \le 2^{a^2}$.
\end{lemma}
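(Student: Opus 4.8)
The plan is to use the characterisation of extreme points of the base polyhedron from Theorem~\ref{thm:extreme:point} together with the fact that $f$ only depends on its effective support $C$, which has size $|C| = a' \le a$. The crucial observation is that a maximal chain $\varnothing = S_0 \subset S_1 \subset \dots \subset S_n = E$ only influences the extreme point $y$ through the values $f(S_i) - f(S_{i-1})$, and these values only depend on how the chain interacts with $C$: since $f(S) = f(S \cap C)$, the marginal $f(S_i) - f(S_{i-1})$ is zero whenever $e_i = S_i \setminus S_{i-1}$ lies outside $C$, and otherwise equals $f(S_{i-1} \cap C \cup \{e_i\}) - f(S_{i-1} \cap C)$, which is determined by the set $S_{i-1} \cap C$ and the element $e_i \in C$.

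First I would note that every extreme point $y$ of $\cB(f)$ has support contained in $C$: by the above, $y_e = 0$ for all $e \in E \setminus C$, so the number of distinct extreme points is at most the number of distinct vectors obtained on the $a'$ coordinates in $C$. Second, I would show that the restriction of a maximal chain of $2^E$ to $C$ induces a maximal chain $\varnothing = T_0 \subseteq T_1 \subseteq \dots \subseteq T_n = C$ of subsets of $C$ (with $T_i = S_i \cap C$), where consecutive sets are either equal or differ by exactly one element of $C$; the ``new'' elements of $C$ appear in some order, giving a maximal chain in $2^C$ after removing repetitions. Conversely, the extreme point depends only on this induced maximal chain of $2^C$, because whenever $e_i \in C$ we have $y_{e_i} = f(T_{i}) - f(T_{i-1})$ with $T_i = T_{i-1} \cup \{e_i\}$. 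Hence the set of extreme points of $\cB(f)$ is in bijection with (a subset of) the set of vectors arising from maximal chains of $2^C$, and there are at most $a'! \le a!$ such chains.

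This already gives $|\EX(\cB(f))| \le a!$, which is stronger than $2^{a^2}$; to match the bound in the statement I would simply observe $a! \le a^a \le 2^{a^2}$ for all $a \ge 1$ (since $\log_2 a \le a$), so the claimed bound follows a fortiori. Alternatively, and perhaps cleaner for the write-up, I would argue directly that each extreme point is determined by a sequence of at most $a'$ marginals, each of the form $f(T \cup \{e\}) - f(T)$ for $T \subseteq C$ and $e \in C \setminus T$; there are at most $2^{a'} \cdot a' \le 2^a \cdot a$ such marginals, but more to the point, the whole extreme point is a function $C \to \RR$ whose value at each of the $a'$ coordinates is one of the (at most $2^{a'}$) possible marginals, giving at most $(2^{a'})^{a'} = 2^{a'^2} \le 2^{a^2}$ extreme points.

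The main obstacle is making precise the claim that an extreme point depends \emph{only} on the induced chain in $2^C$ and not on the finer structure of the original chain in $2^E$ — in particular, handling the steps of the chain where the added element lies outside $C$ (these contribute a zero coordinate and do not change the running intersection with $C$) and the steps where elements of $C$ are added (these are exactly the steps that matter). Once this bookkeeping is set up correctly via Theorem~\ref{thm:extreme:point}, the counting is routine. I would be careful to state the argument so that it does not accidentally assume $f$ is monotone or normalised, since Lemma~\ref{lem:bounded:arity:extreme:points} as stated only assumes submodularity and bounded arity.
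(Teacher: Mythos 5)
Your proposal is correct, and it actually contains two arguments. The ``alternative'' you sketch at the end is precisely the paper's proof: by Theorem~\ref{thm:extreme:point} every extreme point comes from a maximal chain, the coordinates outside the effective support $C$ vanish because adding $e_i \notin C$ does not change $S_{i-1}\cap C$, and each of the at most $a$ remaining coordinates is a marginal $f\bigl((S\cup\{e\})\cap C\bigr)-f(S\cap C)$, hence takes one of at most $2^a$ values, giving $(2^a)^a = 2^{a^2}$ candidates. Your primary argument is a genuine refinement the paper does not make: the extreme point is determined not just coordinatewise but globally by the ordering in which the elements of $C$ appear in the chain (the induced maximal chain of $2^C$), so the number of extreme points is at most $a'! \le a!$, which is much sharper than $2^{a^2}$ and correctly dominated by it via $a!\le a^a\le 2^{a^2}$. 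The bookkeeping you flag as the ``main obstacle'' is exactly the computation $S_i\cap C=(S_{i-1}\cap C)\cup\{e_i\}$ for $e_i\in C$ and $S_i\cap C=S_{i-1}\cap C$ for $e_i\notin C$, which is immediate; and you are right that neither monotonicity nor normalisation is needed, matching the paper's hypotheses. Either version of your argument suffices; the $a!$ version would in fact strengthen the lemma.
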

\begin{proof}
    By \rtheorem{extreme:point},
    a point $x \in \cB(f)$ is an extreme point if and only if there is a maximal chain $\varnothing = S_0 \subset S_1 \subset \dots \subset S_n = E$
    with $x(e_i) = f(S_i) - f(S_{i - 1})$ for $1 \le i \le n$, where $\{ e_i \} = S_i \setminus S_{i - 1}$.
    Letting $C$ denote the effective support of $f$, these equations become $x(e_i) = f(S_{i} \cap C) - f(S_{i - 1} \cap C)$.
    Now $e_i \notin C$ implies
    \begin{align*}
        S_i \cap C = (S_{i - 1} \cup \{ e_i \}) \cap C = S_{i - 1} \cap C,
    \end{align*}
    hence $x(e_i) = f(S_{i} \cap C) - f(S_{i - 1} \cap C) = f(S_{i - 1} \cap C) - f(S_{i - 1} \cap C) = 0$.
    Therefore, only the coordinates corresponding to elements $e \in C$ can be non-zero.
    For any such coordinate $e \in C$, we have
    \begin{align*}
        x(e) = f(S \cup \{ e \}) - f(S) = f\left( \left(S \cup \{ e \} \right) \cap C\right) - f(S \cap C)
    \end{align*}
    where $S$ is the (unique) set in the maximal chain $e$ is added to.
    Now observe that the RHS can attain at most $2^a$ possible values: One for each subset $S \subseteq C$.
    This is because the value is completely determined by how $S$ intersects $C$.

    Putting it all together, we have at most $a$ non-zero coordinates with at most $2^a$ distinct values for each of them.
    Counting all combinations, we retrieve an upper bound of $\left( 2^a \right)^a = 2^{a^2}$ candidates for $x$.
    Thus, $\cB(f)$ has at most $2^{a^2}$ extreme points.
\end{proof}

\end{document}